\crefname{lemma}{Lemma}{lemmas}
\crefname{proposition}{Proposition}{propositions}
\crefname{definition}{Definition}{definitions}
\crefname{theorem}{Theorem}{theorems}
\crefname{conjecture}{Conjecture}{conjectures}
\crefname{corollary}{Corollary}{corollaries}
\crefname{example}{Example}{examples}
\crefname{section}{Section}{sections}
\crefname{appendix}{Appendix}{appendices}
\crefname{figure}{Fig.}{figs.}
\crefname{equation}{Eq.}{eqs.}
\crefname{table}{Table}{tables}
\crefname{item}{Property}{properties}
\crefname{remark}{Remark}{remarks}
\crefname{problem}{Problem}{problems}
\newtheorem{theorem}{Theorem}
\newtheorem{definition}[theorem]{Definition}
\newtheorem{lemma}[theorem]{Lemma}
\DeclareMathOperator{\Tr}{Tr}
\DeclareMathOperator{\poly}{poly}
\newcommand\ee{\mathrm e}
\newcommand\ii{\mathrm i}
\newcommand\dd{\mathrm d}
\newcommand\cost{\mathcal T}
\newcommand\EighProb{Randomized Quantum Eigenvalue Estimation Problem\xspace}
\newcommand{\eighprob}{\nameref{prob:sQEEA}\xspace}
\DeclareMathOperator{\BigO}{O}
\DeclareMathOperator{\Expectation}{\mathbb E}
\author[a,b]{Laura Clinton}
\author[a,c]{Johannes Bausch}
\author[a]{Joel Klassen}
\author[a]{Toby Cubitt}
\affil[a]{Phasecraft Ltd.}
\affil[b]{Department of Computer Science, University College London}
\affil[c]{CQIF, DAMTP, University of Cambridge}
\title{Phase Estimation of Local Hamiltonians\\on NISQ Hardware}
\begin{document}
\maketitle

\begin{abstract}\setstretch{1.0}
  In this work we investigate a binned version of Quantum Phase Estimation (QPE) set out by \cite{Somma2019} and known as the Quantum Eigenvalue Estimation Problem (\nameref{prob:qeep}).
  Specifically, we determine whether the circuit decomposition techniques we set out in previous work, \cite{Clinton2020}, can improve the performance of \nameref{prob:qeep} in the NISQ regime.
  To this end we adopt a  physically motivated abstraction of NISQ device capabilities as in \cite{Clinton2020}.
  Within this framework, we find that our techniques reduce the threshold at which it becomes possible to perform the minimum two-bin instance of this algorithm by an order of magnitude.
  This is for the specific example of a two dimensional spin Fermi-Hubbard model.
  For example, given a $10\%$ acceptable error on a $3\times 3$ spin Fermi-Hubbard model, with a depolarizing noise rate of $10^{-6}$, we find that the phase estimation protocol of \cite{Somma2019} could be performed with a bin width of approximately $1/9$ the total spectral range at the circuit depth where traditional gate synthesis methods would yield a bin width that covers the entire spectral range.
  We explore possible modifications to this protocol  and propose an application, which we call \EighProb  (\eighprob).
  \eighprob outputs estimates on the fraction of eigenvalues which lie within randomly chosen bins and upper bounds the total deviation of these estimates from the true values. One use case we envision for this algorithm is resolving density of states features of local Hamiltonians, such as spectral gaps. 

\end{abstract}

\newpage
\tableofcontents
\newpage

\section{Introduction}

One of the most important subroutines in quantum computing is the Quantum Phase Estimation algorithm. In 1994 Shor showed that a quantum computer can factor an integer in a number of steps polynomially increasing with the bit size of that integer---a task as yet not known to be possible on a classical computer \cite{Shor}.
A little later, Kitaev~\cite{Kitaev1995} came up with an alternative formulation of the factoring algorithm. A key subroutine in Kitaev's formulation of the factoring algorithm is Quantum Phase Estimation (QPE).
QPE retrieves information about the eigenvalues of a unitary $U$ by inducing a phase kickback---through repeated controlled applications of $U$---onto a readout register, to which a quantum Fourier transform (QFT) is then applied.
In addition to the factoring algorithm, QPE appears as an essential subroutine in many other quantum algorithms; indeed it has been shown that QPE can be effectively applied to any problem efficiently solvable on a quantum computer \cite{Wocjan2006}.

A task for which QPE is naturally suited is the retrieval of spectral information about local Hamiltonians, wherein $U$ corresponds to time evolution under a given Hamiltonian and we are obtaining information about the energy eigenvalues of $H$.
This spectral information may be readily applied to the study of natural physical systems, for example in characterizing the properties of materials \cite{Kittel2004}.

Given the clear value of QPE it is worth understanding to what extent the subroutine may be implemented on Noisy Intermediate Scale Quantum (NISQ) devices.
In this context, minimizing both the number of qubits, and the circuit depth, are essential.
Two features of QPE which contribute significantly to these costs are the application of a large coherent QFT, and the number of times a controlled unitary is applied.
This motivates considering a modified form of QPE wherein the QFT is replaced with classical post processing, and one attempts to retrieve as much spectral information as possible under a limited number of applications of the controlled unitary.
This is the strategy employed recently in works by \citeauthor{Obrien2019} \cite{Obrien2019}, \citeauthor{Somma2019}  \cite{Somma2019}, and \citeauthor{lin2021heisenberg} \cite{lin2021heisenberg} in investigating applications of QPE for NISQ devices. In this work we focus in particular on the Quantum Eigenvalue Estimation Problem (\nameref{prob:qeep}) proposed by Somma.

The limiting factor for the performance of these methods is the maximum time $T$ for which one may perform the controlled unitary time evolution.
In previous work \cite{Clinton2020} it was proposed that one could leverage variable time two-qubit interactions at the hardware level to perform more efficient multi-qubit gate synthesis.
Under a cost model wherein the duration of the hardware interaction is proportional to the time parameter of the two-qubit interaction, this method can yield significant improvements on the total wall-time cost of Hamiltonian simulation.
Clearly such gains would have an impact on the performance of the phase estimation methods outlined above. 
In this paper we build on this work and consider the near-term feasibility of performing phase estimation on local Hamiltonians under this framework.

To do this we combine our gate synthesis techniques with \nameref{prob:qeep}.
We find that our techniques significantly reduce the threshold at which it becomes possible to perform \nameref{prob:qeep} for a two dimensional spin Fermi-Hubbard model.
For example, given a $10\%$ acceptable error on a $3\times 3$ spin Fermi-Hubbard model, with a depolarizing noise rate of $10^{-6}$, we find that \nameref{prob:qeep} could be performed with a bin width of approximately $1/9$ the total spectral range at the circuit depth where traditional gate synthesis methods would yield a bin width that covers the entire spectral range.
Furthermore, we reformulate \nameref{prob:qeep} as a general protocol to sample from a convolution between the spectrum of a Hamiltonian and a characterised class of functions. We then explore possible modifications to this protocol and propose an application, which we dub  \EighProb (\eighprob), and explain how this may be used to estimate spectral properties such as density of states.

The outline of this paper is as follows.
In \cref{sec:prelim} we discuss some important preliminary details.
These include: a review of our circuit error model, with a discussion of hardware and noise assumptions (\cref{subsec:error-models});
a review of QPE with and without QFT (\cref{subsec:phaseEst-by-cpp});
and a review of the Quantum Eigenvalue Estimation Problem (\cref{subsec:qeep}).
In \cref{sec:generic-QEEA} we
give a rigorous formulation of \eighprob, and show how this can be reduced to \nameref{prob:qeep}.
In \cref{sec:subcircuit-QEEA} we discuss the details of applying sub-circuit methods to \nameref{prob:qeep} and perform a comprehensive numerical analysis of the expected run-time of \nameref{prob:qeep} using these methods.
We illustrate how \nameref{prob:qeep} may be adapted to specific purposes by considering a modified version with a different smoothing function on the bins (\cref{subsec:altInd}).
This modified version yields smaller bin sizes under a max time evolution $T$, when $T$ is sufficiently small, but with a trade-off of worse asymptotic behaviour as $T$ increases.
Trade-offs of this type are particularly relevant in NISQ applications where minimising the circuit depth of the concrete problem instance is more important than optimal asymptotic scaling.
The numerical analysis is performed for the time evolution on a spin-full Fermi-Hubbard model on a $3 \times 3$ and $5 \times 5$ grid (\cref{subsec:numerics}).
For each lattice size we analyse the performance with target precisions of $10\%$, $5\%$ and $1\%$.
We do not assume perfect hardware, but in the analysis we model the noise inherent to all NISQ hardware.

We find that our decomposition methods consistently reduce the threshold at which it becomes possible to perform a minimum working example of \nameref{prob:qeep} with two bins by an order of magnitude over the range of lattice sizes and target precisions considered. 
This suggests that one or two orders of magnitude improvement in quantum hardware are required before \nameref{prob:qeep} enters the scope of practical application on near term devices.
In particular, the depolarizing noise rates we consider are extremely ambitious, and even with these noise rates we expect smaller bin widths are necessary for practically useful applications. 

\section{Preliminaries}\label{sec:prelim}

\subsection{Circuit Error Model}\label{subsec:error-models}

Here we recap our theoretical abstraction of the hardware capabilities of NISQ-era devices \cite{Clinton2020}. We describe three related concepts in this section: a circuit model, an error model, and a cost model. These concepts are summized by \cref{def:short-pulse-circuit,eq:noise-channel,def:time-cost}.
The circuit model can describe any quantum computing architecture in which discrete gates are generated by continuous-time interactions between qubits, and where these interactions can effectively be described by a Hamiltonian---or effective Hamiltonian---coupling those qubits.

Many current matter-based quantum computing architecture fit this description, including superconducting circuits, most ion trap architectures and NMR quantum computing (liquid-state or solid-state). Although this error model is a theoretical idealisation work has recently been published which supports the validity of this model and demonstrates the advantages of exploiting analogue control \cite{IBMshortpulse}. 

We call this circuit model the subcircuit model is defined as follows.
\begin{definition}[Sub-Circuit Model]\label{def:short-pulse-circuit}
	Given a set of qubits $Q$, a set $I \subseteq Q \times Q$ specifying which pairs of qubits may interact, a fixed two qubit interaction Hamiltonian $h$, and a minimum switching time $t_{\text{min}}$, a sub-circuit pulse-sequence $C$ is a quantum circuit of $L$ pairs of alternating layer types $C = \prod_l^L  U_l  V_l $ with $U_l = \prod_{i \in Q} u_i^l$ being a layer of arbitrary single qubit unitary gates, and  $V_l = \prod_{ ij  \in \Gamma_l}  v_{ij} \left(t_{ij}^l \right)$ being a layer of non-overlapping, variable time, two-qubit unitary gates: $$ v_{ij}(t)=\ee^{\ii t  h_{ij}}$$
	with the set $\Gamma_l \subseteq I$ containing no overlapping pairs of qubits, and $t \geq t_{min}$. Throughout this paper we assume $h_{ij} = Z_iZ_j$. As all $\sigma_i \sigma_j$ are equivalent to $Z_i Z_j$ up to single qubit rotations this can be left implicit and so we take  $h_{ij} =\sigma_i \sigma_j$.
\end{definition}
This defines the gate-set we have access to. We also refer to these gates as ``sub-circuit'' as  they exist one level of abstraction below the digital gate-set of the standard circuit model.
We note that \emph{any} standard quantum circuit can be expressed in this ``sub-circuit'' form as up to single qubit rotations a CNOT gate is equivalent to $\ee^{-\ii \frac{\pi}{4}ZZ}$.
This means that we can cost standard circuits in our framework as sub-circuit gates with $t = \BigO(1)$. We will use this fact in \Cref{subsec:numerics}.
We will express all quantum circuits in terms of sub-circuit gates throughout this work.
We do this so that we can cost all circuits under the same noise model.

The sub-circuit model is useful as it can distinguish circuits under an error model where errors accumulate proportionally to the time for which interactions are switched on. This is in contrast to traditional error models, where errors accumulate according to gate depth and all gates are costed equally.
This leads us to the following error model.
We will assume that errors accumulate proportionally to the time interactions in the system are switched on for.
Formally, we assume that for $Q$ qubits every layer $l$ of sub-circuit gates in a circuit $C=\prod_{l}U_l V_l$ is interspersed with a single qubit depolarizing noise channel $\mathcal{E}$. For each qubit in layer $l$ the full channel $\mathcal{N}_l$ is then approximated as
\begin{align}\label{eq:noise-channel}
	\mathcal{N}_l(\rho): \rho \rightarrow (1-q|t^l|) \rho + q |t^l| \frac{I}{2^Q},
\end{align}
for some noise parameter $q \in [0,1]$ and where we have upper bounded each individual pulse-time in the $l$\textsuperscript{th}-layer as $|t^l|\coloneqq \text{max}_{ij \in \Gamma_l}\left(|t^l_{ij}|\right)$.

Finally, we define the following cost model. We have assumed that errors accumulate based on the length of time two-qubit interactions are switched on for and therefore we cost circuits according to the following definition.
\begin{definition}[Sub-Circuit Cost] \label{def:time-cost}
	The physical run-time of a sub-circuit pulse-sequence $C$ is defined as
	$$
	\cost(C) \coloneqq \sum_l^L \max_{ij \in \Gamma_l}\left(t_{ij}^l\right)
	$$
\end{definition}
The run-time is normalised to the physical interaction strength, so that $\vert h\vert = 1$. This cost metric assumes that the single qubit layers contribute a negligible amount to the total time duration of the circuit.

This cost-metric will help us estimate what circuits can and can't be implemented for a given depolarizing noise rate $q$ if we are willing to tolerate a target error $\epsilon_{\text{tar}}$.

\newcommand{\Ucirc}{\op U_\mathrm{circ}}
\let\O\relax
\newcommand{\O}{\BigO}
\DeclareDocumentCommand{\gateA}{m m}{%
	\draw[fill=white] (#1,#2-0.15) rectangle (#1+0.5,#2+1.15);
}
\DeclareDocumentCommand{\gateRow}{m m O{2} O{7}}{%
	\foreach \x in {0,...,#4}{
		\gateA{\x*#3+#1}{#2}
	}
}
\DeclareDocumentCommand{\errorRow}{m m O{2}}{%
	\foreach \x in {0,...,8}{%
		\draw[fill=white] (\x*#3+#1,#2) circle[radius=.35] node[] {$\mathcal E$};
	}
}

This is captured by the following. We can calculate a trivial error bound by considering the probability that no error occurs at all. Let $\mathcal{U}(\rho)$ be the channel which implements circuit $C$ without noise, and $\mathcal{U'}(\rho)$ the channel which implements circuit $C$ with noise. Then for any measured observable $\BigO$ we get
\begin{align}
\epsilon := \Tr \left[\left(\mathcal{U}(\rho)-\mathcal{U'}(\rho)\right)\BigO\right] \leq \epsilon_{\text{tar}},
\end{align}
as we want to determine the the maximum allowable run-time which keeps this stochastic error $\epsilon$ below a target error $\epsilon_{\text{tar}}$.

This stochastic error $\epsilon$ is upper bounded by the probability that any error occurs in the circuit. For a $Q$-qubit circuit the noise channel defined above this gives us
\begin{align}
\epsilon &\leq 1 - \prod_{l}(1-q |t^l|)^Q \\
&\leq 1 - (1-q)^{V} \leq \epsilon_{\text{tar}} ,
\end{align}
where $V = Q \times \mathcal{T}(C)$ is the volume of the circuit and $\mathcal{T}(C) = \sum_{l} |t^l|$. Inverting this, we say that given an error parameter $q \in [0,1]$, we can only implement circuits whose run-time is bounded as
\begin{align}\label{eq:stoch-noise-bound}
\mathcal{T}(C) \leq \frac{\log(1-\epsilon_{\text{tar}})}{Q \log(1-q)},
\end{align}
assuming we are willing to tolerate a stochastic error rate at most $\epsilon_{tar}$. We will use this in \Cref{subsec:numerics} and particularly in \Cref{fig:main}. These are the key assumptions we make in our circuit error model. In summary, we assume there is sufficient analogue control to express circuits in the sub-circuit form of \Cref{def:short-pulse-circuit} and that depolarizing noise in the device can be described by \Cref{eq:noise-channel}. This in turn makes the cost-metric of \Cref{def:time-cost} useful as then the trivial stochastic error bound \Cref{eq:stoch-noise-bound} can help us estimate requisite depolarizing error rates $q$ for a given task. Our analysis and benchmarking in \Cref{sec:subcircuit-QEEA} will follow this logic.

Having established this circuit error model in our previous work \cite{Clinton2020}, we also derived novel exact gate synthesis techniques to minimize the sub-circuit cost $\cost$ in this error model. These results can be summarized as follows. For a $k$-local Pauli interaction $P_k$, there exists a sub-circuit pulse-sequence  $C\coloneqq\prod_l^{L} U_l V_l$ which implements the evolution operator $\ee^{-\ii t  P_k}$. Most importantly, for any target time $t$ the run-time of that circuit is bounded as
\begin{align}\label{eq:general-k-cost}
	\cost \left(C \right) \leq \BigO\left( |t|^{\frac{1}{k-1}}\right),
\end{align}
according to the notion of run-time established in \cref{def:time-cost}. The details of the decompositions which accomplish this can be found in our previous work \cite[Appendix B]{Clinton2020}. Here we only note that they are exact and minimize $\cost(C)$ as $t \rightarrow 0$ at the expense of increased circuit depth.

\subsection{Quantum Phase Estimation By Classical Post Processing}\label{subsec:phaseEst-by-cpp}
Let $f(t)$ be an integrable function. For consistency with past literature we adopt the convention for the Fourier transform and its inverse to be
\begin{align}
    F(w) = \mathcal F[f](w) = \frac{1}{2\pi} \int_{\mathbb{R}} f(t) \ee^{-\ii w t} \dd t,
    \quad\text{and}\quad
    \mathcal F^{-1}[F](t) = \int_{\mathbb{R}} F(w) \ee^{\ii w t} \dd w.
\end{align}
For a periodic function $f$ one can express $f(t)$ as a Fourier series via
\begin{align}
    f(t) = \sum_{w=-\infty}^\infty F_w \ee^{\ii w t}
    \quad\text{with coefficients}\quad
    F_w \coloneqq F(w).
\end{align}
With these conventions, the Fourier inversion theorem reads $\mathcal F^{-1} = \mathcal F\mathcal R$ for the reflection functional $R[f](t) \coloneqq f(-t)$.

\emph{Quantum Phase Estimation} (QPE) describes a family of quantum algorithms for retrieving information about the phases $\phi_j$ associated with the eigenvalues $\ee^{\ii \phi_j}$ of a unitary operator $U$.
Such a unitary operator is generated by a hermitian operator $H$ via $U=\exp(-i H) = \sum_{n} \exp(-i \lambda_n) \ketbra{\phi_n}$, where $\{ \lambda_n \}$ are the eigenvalues of $H$---which are equivalent to the phases $\phi_j$ of $U$ modulo $2\pi$---and where the $\{\ket{\phi_n} \}$ are the (same) eigenvectors of $H$ and $U$, respectively.

All variants of phase estimation may be thought of in the following terms. Consider some input state $\rho$ and a time evolution operator $U(t)$ with phases $\{\phi_j\}$ and eigenstates $\{\ket{\phi_j}\}$. Prepare some time series data
\begin{equation}\label{eq:g(t)}
g(t) \coloneqq \sum_n \ee^{\ii \phi_n t} \bra{\phi_n} \rho \ket{\phi_n},
\end{equation}
and extract information about the spectrum of this data
\begin{equation}
G(w)\coloneqq \mathcal{F}[g] = \sum_i \delta(w-\phi_i)  \bra{\phi_i} \rho \ket{\phi_i}.
\end{equation}
In the original phase estimation protocol---illustrated in \cref{fig:QPE}---this process is performed coherently.
First one prepares the desired input state $\rho=\rho_I$ on an input register $I$, and a coherent superposition of time states $\sum_{t=1}^T\ket t$ on a control register $C$.
The time register then controls the unitary time evolution on register $I$, to obtain
\begin{equation}
\rho_{CI} \propto \sum_{tt'} \ketbra{t}{t'}_C \otimes U(t) \rho_I U(t')^\dagger.
\end{equation}
Tracing out the input register then yields time series data, namely
\begin{equation}
\Tr_I(\rho_{CI}) \eqqcolon \rho_C \propto \sum_n \sum_{tt'} \ketbra{t}{t'}_C \ee^{-\ii \phi_n t} \bra{\phi_n}\rho_I\ket{\phi_n} \ee^{\ii \phi_n t'} = \sum_{t t'}g(t'-t) \ketbra{t}{t'}_C .
\end{equation}
The quantum Fourier transform is applied, and we obtain
\begin{equation}
\textrm{QFT}(\rho_C) \propto \sum_{w w'}\left(\sum_{t t'} \ee^{-\ii w t + \ii w't'} g(t'-t)\right) \ketbra{w}{w'}_C.
\end{equation}
Measuring the control qubits in the computational basis is equivalent to sampling from $G(w)$ over possible $n$-bit values of $w$.

\begin{figure}
\begin{minipage}{.3 \linewidth}
\centering
\begin{quantikz}
\lstick{$\ket{+}$} &\ctrl{1} & \meter{$X+iY$} \arrow[r] & \rstick{$g(t)$} \\
\lstick{$\rho$} & \gate{U(t)} & \qw &
\end{quantikz}
\end{minipage}
\hspace{2cm}
\begin{minipage}{.5\linewidth}
\centering
\begin{quantikz}
\lstick{$\sum_{t=1}^{T} \ket{t}$} &\ctrl{1} &\gate{QFT}& \meter{$Z$} \arrow[r] & \rstick{$G(w)$} \\
\lstick{$\rho$} & \gate{U(t)} & \qw &
\end{quantikz}
\end{minipage}
\caption{Generic schematics for QPE with a QFT (right) and without (left). The circuit on the left describes how to obtain the time series data $g(t)$ from \cref{eq:g(t)}.
The controlled operation in the right circuit is a time control, in the sense that conditioned on the (multi-qubit) control lane being in state $\ket t$, the unitary $U(t)$ is executed.
}\label{fig:QPE}
\end{figure}
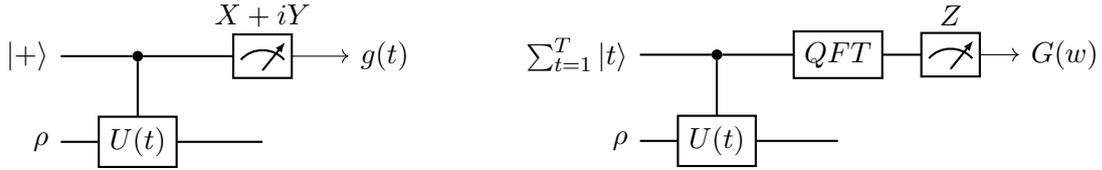

Applying QPE to problems involving local Hamiltonians differs significantly from its application to factoring in two important ways.
First, it is unlikely that one can construct a polynomial depth circuit decomposition
of $U(T)$ for exponentially large $T$ \cite{atia2017fast}.

Second, the size of a problem involving a local Hamiltonian goes as the number of qubits.
As the number of qubits grows, the number of eigenvalues increase exponentially, but the spectral range of the Hamiltonian increases polynomially \footnote{The spectral norm is upper bounded by the sum of the norm of the local terms.}. 
It is the spectral range of the Hamiltonian which determines the bandwidth of $g(t)$, and so it is not clear that the inclusion of a QFT should yield significant gains for such problems.
If on the other hand one wishes to resolve fine grain details of $G(w)$, then from a classical signal processing perspective the parameter which dictates this resolution is the maximum simulated time $T$, not the number of samples of $g(t)$.

This motivates an alternative method, which avoids the use of an expensive QFT.
Instead of preparing and processing a coherent superposition of $g(t)$ sampled at an exponential number of time steps, $g(t)$ is directly measured for a polynomial number of time steps up to a maximum time $T$, and this data is processed classically.
The procedure for measuring $g(t)$ (illustrated in \cref{fig:QPE}) is a straightforward phase kickback protocol, where the control register is prepared in the $\ket{+}$ state, the input register is prepared with the input state $\rho_I$, and a controlled time evolution is applied
\begin{equation}
\rho_{CI} = \frac{1}{2} \sum_{ij=0}^1 \ketbra{i}{j} \otimes U(t)^i \rho_I U(-t)^j
\end{equation}
Measuring the expectation value of $X+iY$ on the control register yields:
\begin{align}
 \langle X_C \rangle + i \langle Y_C \rangle  = \Tr[\rho_I U(t)]
=  \sum_i \ee^{- i  \phi_i t} \bra{\phi_i}\rho_I \ket{\phi_i} = g^*(t) = g(-t). \label{eq:gk}
\end{align}
Equipped with a quantum black box for values of $g(t)$  it would appear that the remainder of the problem is merely a classical signal processing problem, about which a rich body of literature exists. However the existing classical signal processing literature typically assumes that the signal is composed of a polynomial number of frequencies, which is not generally true for a signal as in \cref{eq:g(t)} composed of up to exponentially many distinct eigenvalues of the Hamiltonian of interest.

\subsection{The Quantum Eigenvalue Estimation Problem and Algorithm}\label{subsec:qeep}
As outlined in the last section, resolving all exponentially many eigenvalues of a Hamiltonian is costly, as it would require us to simulate an exponential time evolution under $H$.
Yet it is conceivable that one can obtain approximate information about the spectrum in a more efficient manner. Such an algorithm was introduced by \cite{Somma2019} and we recap it here.
The \emph{quantum eigenvalue estimation problem} (\nameref{prob:qeep})---as defined by \cite{Somma2019}---is just such an approximation.
In contrast to the ambitious goal of obtaining exponentially many eigenvalues for an operator $H$, the range of possible eigenvalues is divided up into bins, and the aim is to estimate
the fraction of eigenvalues in each bin up to some smoothing on the boundaries of the bin.

\begin{problem}[Quantum Eigenvalue Estimation Problem][QEEP]\label{prob:qeep}
\probleminput{
A Hamiltonian $H$ with associated eigenvalues $\{\lambda_i\}$ and eigenvectors $\{\ket{\lambda_i}\}$.
An input quantum state $\rho$.
A bin width $\eta>0$, precision parameter $\epsilon>0$ and confidence level $c<1$.\footnote{In Somma's original construction the bin width $\eta$ is equal to the precision parameter $\epsilon$. We have decoupled them, since they are conceptually distinct. }
An indicator function $f(w): \mathbb{R} \rightarrow [0,1]$ s.t. $\forall w \not\in [-\eta, \eta]: \;  f(w) =  0$ and $\forall w \in [0, \eta]: \; f(w)+f(w-\eta)=1$.
}
\problempromise{$||H||_{\infty} \leq 1/2$.}
\problemquestion{A vector $\textbf{q}=(q_0, q_1,..., q_M)$ with $M=\lfloor 1/\eta \rfloor$ s.t.\ with probability $\ge c$,  $||\textbf{q}-\textbf{p}||_1 \leq \epsilon$.
  Here
  $p_j = p(w_j)\coloneqq\sum_{i} f(\lambda_i-w_j) \bra{\lambda_i}\rho\ket{\lambda_i}$,
 
  and $w_j =j\eta-1/2$.
}
\end{problem}
More specifically if one sets the input state to be the maximally mixed state this becomes  $p_j = p(w_j)\coloneqq\sum_{i} f(\lambda_i-w_j) 2^{-Q}$, where $Q$ is the number of qubits.

The indicator function $f$ in \nameref{prob:qeep} 
 may be thought of as delineating the ``shape'' of a bin. The values $p_j$ can be thought of as the probability of observing the initial state $\rho$ within a certain energy band, centred around $j \eta - 1/2$, whose resolution and sharpness is determined by the indicator function $f$. Equivalently, one may understand the values $p_j$ as evenly spaced samples of $G(w)$ convolved with $f$:
\begin{equation}\label{eq:qeep-as-conv}
 p(w)=[G * f](w)
\end{equation}

Depending on the choice of $f$, solutions to a \nameref{prob:qeep} instance might have varying applications, and may be easier or harder to solve. A general scheme for constructing a function $f$ satisfying the conditions outlined in the problem statement is to take the ideal indicator function
\begin{equation}
\textrm{rect}_\eta(w) = \begin{cases}1 & w \in [-\eta/2, -\eta/2] \\ 0 & \textrm{otherwise} \end{cases}
\end{equation}
and convolve it with a normalized smoothing kernel $h(w) \in [0, 1]$ with support only in $[-\eta/2, \eta/2]$ \footnote{Demanding that $h(w)$ is normalized as $\int h(\omega) d \omega = 1$ ensures that $\forall w \in [0, \eta]: \; f(w)+f(w-\eta)=1$.} to obtain
\begin{equation}\label{eq:fConv}
f(w) = [\textrm{rect}_\eta * h](w).
\end{equation}

The algorithm that Somma proposes for resolving \nameref{prob:qeep} relies on the following signal processing principle. For a well chosen smoothing kernel $h$, the Fourier transform $P(t)\coloneqq\mathcal{F}[p](t)$ of $p(w)$ may be made to fall off rapidly beyond a maximum time $T$, by smoothing out the sharply varying spectral function $G(w)$. Furthermore, since $p(w)$ has bounded support, $P(t)$ need only be sampled at a finite rate. Thus, one need only sample $g(t)$ at a finite rate to a maximum time $T$ to retrieve a good approximation to the values $p_j$.

More precisely, we may consider that since $G(w)$ and $f(w)$ have bounded support they may be taken to be $2\pi$ periodic\footnote{Note that the period could be chosen tighter than $2\pi$ as $G(w)$ is bounded within $[-1/2,1/2]$, which will likely yield a small constant improvement in the overall cost. We will not analyse this here.}
and decomposed into Fourier series expansions:
\begin{align}\label{eq:r-g-pair}
G(w) =\frac{1}{2 \pi} \sum_{t = - \infty}^{\infty} g_{-t} \ee^{\ii t w} \quad \text{with} \quad g_t = g(t)
\end{align}
\begin{align}\label{eq:f-F-pair}
f(w) = \sum_{t = - \infty}^{\infty} F_t \ee^{\ii t w} \quad \text{with} \quad F_t = \mathcal{F}[f](t).
\end{align}
and thus:
\begin{equation}
p(w) = [G * f](w) = \int_{-\infty}^{\infty} G(w') f(w-w') dw' = \sum_{t = - \infty}^{\infty}\ee^{\ii t w} F_t  g_{-t}
\end{equation}
For fixed values $w_j = -1/2+j\eta$ we retrieve
\begin{equation}\label{eq:pExact}
p_j = \sum_{t = - \infty}^{\infty} F_t^j g_{-t} \;,\; F_t^j=\ee^{\ii t w_j} F_t
\end{equation}

In Somma's construction the indicator function $f(w)$ is defined as in \cref{eq:fConv}, with a smoothing kernel $h(w)= a \exp(-1/(1-(cw)^2) )$. With appropriate scaling (correct choice of $c$ and normalization $a \approx 2.25$). It is then argued that for any bin width $\eta$ there exists some constant $\alpha > 1$ such that for all $|t| \geq \alpha/\eta$
\begin{equation}\label{eq:Fdecay}
|F_t| \le \eta \exp\left(- \sqrt{|t| \eta/2}\right).
\end{equation}
Thus truncating the sum in \cref{eq:pExact} up to some maximum value $T \geq \alpha/\eta$ yields an approximation
\begin{equation}\label{eq:pApprox}
p_j' = \sum_{t = - T}^{T} F_t^j g_{-t}
\end{equation}
such that $\| \mathbf{p'} - \mathbf{p}\| \le \frac{\epsilon}{2}$.
We include an explicit computation of this maximum value $T$ in \cref{app:TBound}. However we emphasize that this is an asymptotic result, and that finding $\alpha$ is a non-trivial task. For this reason we also consider a different indicator function in \cref{subsec:altInd}.

In addition to the source of error from this truncation, there is a source of error stemming from imperfect measurements of $g(t)$. It is assumed that the $T$-dimensional vector of actual measurements $\tilde{\textbf{g}}$ satisfies
$\| \tilde{\textbf{g}}-\textbf{g} \|_1 \leq \epsilon/2 $ with probability greater than the confidence parameter $c$. Thus one retrieves a vector $\textbf{q}$ given by
 \begin{equation}\label{eq:qApprox}
q_j = \sum_{t = - T}^{T} F_t^j \tilde{g}_{-t}
\end{equation}
such that $\| \textbf{q}-\textbf{p} \|_1 \leq \epsilon $ with probability greater than $c$. Further details concerning this can be found in \cite{Somma2019}. Importantly $c$ depends on the number of repetitions of the circuit depicted in \cref{fig:QPE}; and not the maximum evolution time $T$.

Choosing the $1$-norm yields bounds on the \emph{overall} deviation from the ideal vector $\textbf{p}$. One may consider other norms, such as the max norm, which bounds the \emph{maximum} deviation, and yields a milder variant of Somma's error bound. For completeness and for the purpose of deriving explicit constants in the error bounds we walk through Somma's error derivation in detail in \cref{app:TBound}.

\section{\EighProb}\label{sec:generic-QEEA}
\subsection{Problem Definition}\label{subsec:generic-QEEA-def}
As discussed in the introduction, a fundamental numerical primitive is to ask for eigenvalues for a given normal or hermitian operator $A$.
More specifically, given a hermitian matrix $A$, the task is to return its spectrum---the set of its eigenvalues---up to some bit precision $b$, ie numerical accuracy $\xi=2^{-b}$.
Instead of speaking of the numeric precision with which results are obtained, one can take the viewpoint that such a function sorts $A$'s eigenvalues into $2^b$ many bins of width $\propto \xi$.
Such a function is e.g.\ implemented in matlab's \texttt{eig} or Mathematica's \texttt{Eigenvalues} functions.
Assuming no other numerical errors are present during the calculation,\footnote{Arbitrary precision arithmetic or internal arithmetic can give certain guarantees on the precision of the eigenvalues obtained.} we expect the result to be such that eigenvalues are rounded to their nearest bin.

As currently formulated, the Quantum Eigenvalue Estimation Problem (\nameref{prob:qeep}) contains overlapping indicator functions $f^j(\omega) := f(\omega - \omega_j)$, so that an eigenvalue in the support of both $f^j$ and $f^{j+1}$ will contribute to both $p_j$ and $p_{j+1}$. There is an important reason for formulating \nameref{prob:qeep} in this manner. Consider what a ``perfect'' choice of non-overlapping $f^j(\omega)$ might look like. As described in \cite{Somma2019}, the natural choice would be an indicator function such as
\begin{align}
	f(\omega) = \begin{cases}
		1 &  - \eta/2 \leq \omega <  + \eta/2 \\
		0 & \text{otherwise}.
	\end{cases}
\end{align} 
This choice would resolve the spectrum perfectly; and furthermore remove $f(\omega)$ from the input of the problem statement. However, as explained by the authors of \cite{Somma2019} the slow decay of the Fourier coefficients would mean one would have to simulate $\ee^{-\ii T H}$ for impractically large $T$. Due to this one must choose a smooth $f(\omega)$ which approximates this ideal solution through having smoothly decaying support over $ -\eta \leq \omega<  + \eta$. That is to say the indicator functions overlap each other so that $\forall w \in [0, \eta]: \; f(w)+f(w-\eta)=1$. 

To address these considerations we formulate a problem which we call the Randomized Quantum Eigenvalue Estimation Problem, or \eighprob for short, which is close in spirit to Matlab's eig or Mathematica's Eigenvalues function. For a given $H$ \eighprob outputs a binned estimate of the spectrum up to a probabilistic guarantee on the likelihood that eigenvalues close to bin boundaries have been misplaced; and which replaces the issue overlapping bins with \emph{randomly defined} bins. We will now define this problem formally and then propose an algorithm to solve \eighprob via a reduction to \nameref{prob:qeep}.

\begin{problem}[\EighProb][rQEEP] \label{prob:sQEEA}
\probleminput{
    Local Hamiltonian $H=\sum_i h_i$.
    Maximal point spacing $\xi > 0$, deviation $\Delta\ge0$, and confidence $c \in [0, 1]$.
}
\problempromise{$\operatorname{spec}(H) \subseteq [-1/2,1/2]$.}
\problemquestion{
    $m \coloneqq 2/\xi$. List of coordinates $(x_i, y_i)$ for $i=1, \ldots, (m+1)$  with $x_0 \coloneqq -1/2$ and $x_{m+1} \coloneqq 1/2$, such that
    \begin{enumerate}
    \item  $x_i$ are randomly but uniformly spaced, ie for all $x\in[-1/2,1/2]$ there exists a $x_i$ s.t.\ $|x-x_i| < \xi/4$, and  the distance between neighbouring points is bounded by $|x_i-x_{i-1}| < \xi$.
    \item the total absolute difference between the estimated ($y_i$) and true ($n_i$) number of eigenvalues of $H$ within the interval $B_i=[x_{i-1}, x_i)$ is upper bounded -- with probability greater than $c$ -- by the deviation $\Delta$, ie $\sum_i |y_i - n_i| \le \Delta$.
    \end{enumerate}}
\end{problem}
We first remark that the above problem is well-defined: indeed,
we can always first obtain the entire spectrum of $H$ to sufficient precision, and choose the abscissas such that the intervals $B_i$ never have an eigenvalue on their boundaries (which is possible even with the restriction of $n=\poly 1/\xi$, as the spectrum of $H$ is finite); and then define $y_i$ as the number of eigenvalues within that interval. This answers the problem even for deviation $\Delta=0$ and confidence $c=1$.

Naturally, what we will be interested in is whether we can answer \eighprob 
efficiently, and for which sets of parameters we can do so.

\subsection{Reducing \eighprob to QEEP}\label{subsec:Generic-QEEA-alg}
\newcommand{\lwr}{^\mathrm{lwr}}
\newcommand{\upr}{^\mathrm{upr}}
Given a local Hamiltonian $H=\sum_i h_i$ acting on $Q$ qubits (which we assume to be rescaled such that the spectrum lies within $[-1/2,1/2]$), we can solve the \eighprob by using \nameref{prob:qeep} as a subroutine and given access to a \nameref{prob:qeep} solver with a sub-circuit cost of $\cost(Q,\eta,\epsilon,c)$.  We do this via the following algorithm. 
\begin{figure}[t]\label{fig:rQEEP-envelopes}
	\centering
	\includegraphics[width=0.7\linewidth]{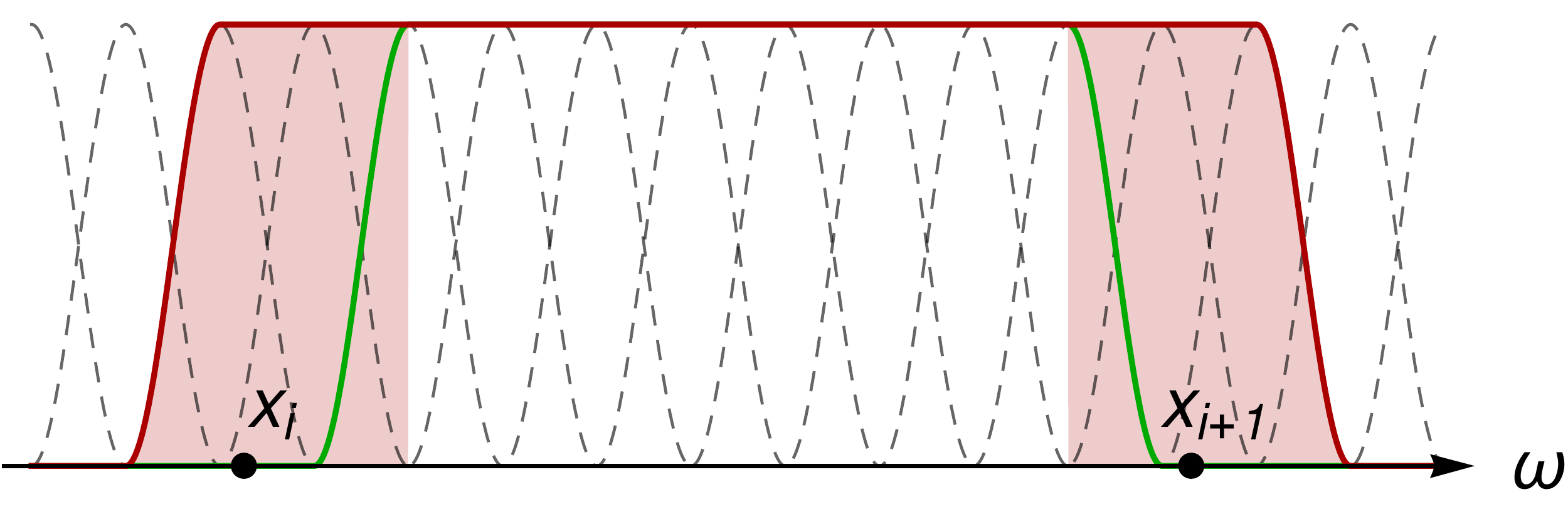}
	\caption{The dashed grey lines are QEEP indicator functions $f(\omega - \omega_j)$. The red envelope highlights which $f(\omega - \omega_j)$ have compact support in $R\upr_i$ and the green envelope those which have compact support in $R\lwr_i$. If an eigenvalue lies in the red shaded region then $y\lwr_i \neq y\upr_i$ and the estimate $y_i$ deviates from $n_i$, the true number of eigenvalues in bin $B_{i+1} := [x_i, x_{i+1})$. In \Cref{ap:rQEEP-proofs} we use a volume argument to bound the likelihood of these randomly chosen shaded regions falling on an eigenvalue of $H$ and thus prove \Cref{thm:runtime-main}.}
\end{figure}
\begin{enumerate}
	\item Given a maximal point spacing $\xi > 0$, deviation $\Delta\ge0$, and confidence $c \in [0, 1]$ uniformly partition the interval $[-1/2,1/2]$ into $m\coloneqq 2/\xi$ many segments 
	\begin{align}
		\left[ -\frac12, -\frac12 + \frac\xi2 \right) \cup \left[ -\frac12 + \frac\xi2, -\frac12 + \xi \right) \cup \ldots \cup \left[ \frac12 - \frac\xi2, \frac12 \right] \eqqcolon S_1 \cup \ldots \cup S_m.
	\end{align}

	\item With confidence $c$ run the following subroutine $\lceil \log_{(1-\frac{c}{2})}(1-c) \rceil$ times, and collect the results in a list:
	\begin{enumerate}
		\item Within each segment, choose $x_i$ uniformly at random from $S_i$. Set $x_0 \coloneqq -1/2$, and $x_{m+1} \coloneqq 1/2$.
		\item Solve a \nameref{prob:qeep} instance with a suitable indicator function $f$, input state $\rho = I/2^Q$, bin width $\eta \coloneqq \Delta/(6\times 2^{Q+1}(m+1)^2)$, precision $\epsilon=\Delta /(8 \times 2^{Q+1})$, and confidence $c$. Denote the output with $\mathbf q=(q_0,\ldots,q_M)$ where $M=\lfloor 1/\eta \rfloor$, for the bins centered around $w_j = -1/2 + j\eta$ for $j=0,\ldots,M$.
		\item For all $i=0,\ldots,m$, define lower and upper envelopes via
		\begin{align}
			y\lwr_i &\coloneqq 2^Q \!\!\!\! \sum_{w_j \in R\lwr_i} \!\!\!\! q_j
			&\text{and}&&
			y\upr_i &\coloneqq 2^Q \!\!\!\! \sum_{w_j \in R\upr_i} \!\!\!\! q_j\\
			R\lwr_i &\coloneqq [x_i + \eta, x_{i+1} - \eta ]
			&\text{and}&&
			R\upr_i &\coloneqq [x_i - \eta, x_{i+1} + \eta ]
		\end{align}
		and where we implicitly assume $R\lwr_i = \emptyset$ if $x_i + \eta > x_{i+1} - \eta$.
		\item Return $y_i \coloneqq (y\lwr_i + y\upr_i) / 2$.
	\end{enumerate}
	\item Return the result from the subloop with the smallest deviation of lower and upper bounds.
\end{enumerate}

In \Cref{ap:rQEEP-proofs} we prove the above algorithm has the following quantum overhead
\begin{theorem}\label{thm:runtime-main}
Given access to a \nameref{prob:qeep} solver with a subcircuit runtime $\cost(Q,\eta,\epsilon,c)$ then the following holds. For a point spacing $\xi > 0$ (or equivalently $m := 2/\xi$), deviation $\Delta\ge0$, and confidence $c$, the overall subcircuit runtime to answer \eighprob using \nameref{prob:qeep} is 
\begin{align}
	\cost\left(Q,\frac{1}{12 (m+1)^2} \times \frac{\Delta}{2^Q},\frac{1}{16} \times \frac{\Delta}{2^Q},c \right),
\end{align}
and the additional classical overhead is $\log_{(1-\frac{c}{2})}(1-c)$.
\end{theorem}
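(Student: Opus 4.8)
The runtime and the classical overhead are immediate bookkeeping: every quantum operation sits inside the single \nameref{prob:qeep} call of step~2(b), so the total subcircuit runtime is $\cost(Q,\eta,\epsilon,c)$ with $\eta=\Delta/(6\cdot2^{Q+1}(m+1)^2)=\frac{1}{12(m+1)^2}\frac{\Delta}{2^Q}$ and $\epsilon=\Delta/(8\cdot2^{Q+1})=\frac{1}{16}\frac{\Delta}{2^Q}$, which is exactly the claimed expression, while $\lceil\log_{(1-c/2)}(1-c)\rceil$ is just the number of repetitions. All the work is therefore in correctness, i.e.\ in showing the output meets the two guarantees of \nameref{prob:sQEEA}. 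Property~1 is deterministic: partitioning $[-1/2,1/2]$ into $m$ segments of width $\xi/2$ and drawing one abscissa per segment fixes the neighbour spacing $|x_i-x_{i-1}|<\xi$ and the covering claim from the segment geometry alone, with no appeal to the spectrum. The substance is Property~2.

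For Property~2 the plan is to sandwich the true bin count $n_i$ (the number of eigenvalues in $[x_i,x_{i+1})$) between an ideal lower and upper envelope and then control the two ways the algorithm departs from this ideal. Writing $\tilde y_i\lwr:=2^Q\sum_{w_j\in R_i\lwr}p_j$ and $\tilde y_i\upr:=2^Q\sum_{w_j\in R_i\upr}p_j$ for the envelopes built from the exact \nameref{prob:qeep} values $p_j=\sum_\lambda f(\lambda-w_j)2^{-Q}$, the key observation is that $f(w)+f(w-\eta)=1$ together with $\operatorname{supp}f\subseteq[-\eta,\eta]$ makes $\{f(\,\cdot\,-w_j)\}_j$ a partition of unity, so $\sum_j f(\lambda-w_j)=1$ for every eigenvalue. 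Hence $2^Q\sum_{w_j\in R}p_j=\sum_\lambda c_R(\lambda)$ with weights $c_R(\lambda):=\sum_{w_j\in R}f(\lambda-w_j)\in[0,1]$. Because $R_i\lwr=[x_i+\eta,x_{i+1}-\eta]$ is inset by $\eta$, only eigenvalues genuinely inside $[x_i,x_{i+1})$ can have positive weight, giving $\tilde y_i\lwr\le n_i$; because $R_i\upr=[x_i-\eta,x_{i+1}+\eta]$ is inflated by $\eta$, every eigenvalue inside the bin has weight $1$, giving $n_i\le\tilde y_i\upr$. Consequently $\bigl|\tfrac12(\tilde y_i\lwr+\tilde y_i\upr)-n_i\bigr|\le\tfrac12(\tilde y_i\upr-\tilde y_i\lwr)$.

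It then remains to bound the deviation of the measured envelopes from the ideal ones and the ideal envelope gap. For the first, I would propagate the \nameref{prob:qeep} guarantee $\|\mathbf q-\mathbf p\|_1\le\epsilon$ (which holds with probability at least $c$) through the sums: since the $R_i\lwr$ are pairwise disjoint and the $R_i\upr$ overlap only at neighbouring boundaries, each $w_j$ is counted at most once in $\sum_i\sum_{w_j\in R_i\lwr}$ and at most twice in $\sum_i\sum_{w_j\in R_i\upr}$, so $\sum_i|y_i-\tfrac12(\tilde y_i\lwr+\tilde y_i\upr)|\le\tfrac32\,2^Q\epsilon=\tfrac{3}{32}\Delta$. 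For the gap, $\tilde y_i\upr-\tilde y_i\lwr=\sum_\lambda(c_{R_i\upr}(\lambda)-c_{R_i\lwr}(\lambda))$ is supported on eigenvalues within $2\eta$ of the random boundaries $x_i,x_{i+1}$, so $\sum_i(\tilde y_i\upr-\tilde y_i\lwr)\le2\sum_k N_k$ with $N_k$ the number of eigenvalues within $2\eta$ of $x_k$. Taking the expectation over the uniform choice of each $x_k$ in its width-$\xi/2$ segment, linearity of expectation and the elementary bound $\sum_k\Prob[x_k\in(\lambda-2\eta,\lambda+2\eta)]\le8\eta/\xi$ give $\Expectation[\tfrac12\sum_i(\tilde y_i\upr-\tilde y_i\lwr)]\le8\cdot2^Q\eta/\xi\le\Delta/(3(m+1))$ after substituting $\eta$ and $\xi=2/m$. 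Markov's inequality then keeps this boundary error below $\tfrac{2}{3(m+1)}\Delta$ with probability at least $1/2$ per run. This is precisely the volume argument advertised in \cref{fig:rQEEP-envelopes}.

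Finally I would amplify and combine. The \nameref{prob:qeep} randomness and the binning randomness are independent, so in one run both ``\nameref{prob:qeep} succeeds'' (probability $\ge c$) and ``boundary error small'' (probability $\ge\tfrac12$) hold together with probability at least $c/2$; by the triangle inequality the total deviation is then at most $\tfrac{3}{32}\Delta+\tfrac{2}{3(m+1)}\Delta\le\Delta$, which is what the generous constants $6$, $8$ and $(m+1)^2$ in $\eta,\epsilon$ buy. Repeating the subroutine $T=\lceil\log_{(1-c/2)}(1-c)\rceil$ times makes the probability that \emph{some} run is good at least $1-(1-c/2)^T\ge c$, matching the stated classical overhead. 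The step I expect to be the real obstacle is the selection rule of step~3: it returns the run of smallest \emph{observed} deviation $\sum_i(y_i\upr-y_i\lwr)$, but the sandwiching $y_i\lwr\le n_i\le y_i\upr$ is only guaranteed when the \nameref{prob:qeep} call in that run actually succeeded, so a run where \nameref{prob:qeep} failed could report a spuriously small observed deviation while carrying a large true error. One must therefore argue that a small observed deviation still certifies a small true error, or that the selected run inherits \nameref{prob:qeep} success, before concluding that minimising the observed deviation returns a run meeting the $\Delta$ bound. Pinning down this interaction between the detectable quantity and the two confidence parameters, rather than the essentially routine envelope and volume estimates, is where the care is required.
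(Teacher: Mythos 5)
Your proposal is correct and follows essentially the same route as the paper's proof in \cref{ap:rQEEP-proofs}: the ideal-envelope sandwich $y_i^{\mathrm{lwr}} \le n_i \le y_i^{\mathrm{upr}}$, the triangle-inequality split into a measurement term and an envelope-gap term, the volume/expectation argument of \cref{lem:exp} combined with Markov's inequality, and the $\lceil\log_{(1-c/2)}(1-c)\rceil$-fold amplification. Two minor points of comparison: you propagate the \nameref{prob:qeep} guarantee as a $1$-norm bound together with a multiplicity count of bin centers over the envelope regions, whereas the paper invokes a per-bin bound $|q_j-p_j|\le\epsilon/(M+1)$ and counts the number of bins per region $R_i^{\mathrm{upr}}$ --- your version is arguably cleaner, since it uses only the guarantee as stated in the problem definition (and your explicit partition-of-unity proof of the sandwich is spelled out where the paper merely asserts $n_i \in [y_i^{\mathrm{lwr}}, y_i^{\mathrm{upr}}]$); your $4\eta$ ambiguity window (versus the paper's sharper $3\eta$) and the slightly optimistic multiplicity constants are absorbed by the generous choices of $\eta$ and $\epsilon$, so nothing breaks. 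Finally, the selection-rule subtlety you flag at the end is genuine: the paper's proof likewise only establishes that with probability $\ge c$ \emph{some} run of the subloop satisfies the $\Delta$ bound, and never argues that the run returned in step 3 (smallest observed envelope deviation) is a successful one --- a run in which the \nameref{prob:qeep} call failed could indeed exhibit a spuriously small observed gap. So your caveat identifies a lacuna in the paper's own argument rather than in yours.
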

Using \cref{thm:runtime-main} we can solve for $m$ and $\Delta$ in terms of $\epsilon$, $\eta$ and $Q$ and find the following dependence
\begin{align}\label{eq:m}
		m &= \sqrt{\frac{4\epsilon}{3\eta}} -1 \\
		\Delta &= 2^Q \times 16 \epsilon.\label{eq:Delta}
\end{align}
\Cref{eq:m,eq:Delta} tells us the number of bins and deviation with which we can expect when solving an instance of \eighprob given access to a \nameref{prob:qeep} solver which can achieve and bin width of $\eta$ and a precision $\epsilon$.
The question is whether this now allows us to solve a problem of interest.
The total deviation $\Delta$ refers to maximum number of misplaced eigenvalues. From the above we see that the total \emph{fraction} of misplaced eigenvalues is proportional to $\epsilon$, even though the total number scales as $2^Q$.

The runtime $\cost$ depends on this $\eta$ and $\epsilon$.
If we assume that we can solve \nameref{prob:qeep} efficiently for $\eta, \epsilon = \text{poly}(Q^{-1})$, then we can expect to efficiently solve \eighprob for a polynomially scaling number of bins $m = \text{poly}(Q)$ and a polynomially scaling fraction of misplaced eigenvalues, $\Delta/2^Q = \text{poly}(Q^{-1})$.

From \cite{Brown2011} we know that the computational difficulty of calculating the density of states for local Hamiltonians is in the counting version of QMA (known as \#BQP) and so we we cannot use \eighprob to do this efficiently. However, we can still determine important features of the spectrum. The existence of a Mott insulator transition in Fermi-Hubbard models, and the Hamiltonian parameters for which this transition occurs, are of widespread interest \cite{review_mott}. As the Mott insulator phase features a spectral gap we could try to observe this with \eighprob.
Generally speaking we could  resolve gaps in the spectrum of a family of local Hamiltonians, provided this gap shrinks no faster than polynomially in the system size. This is because we can tolerate a polynomially increasing number of bins and so maintain the same resolution as we increase $Q$

\section{Implementing and Benchmarking QEEP and rQEEP for Fermi-Hubbard}\label{sec:subcircuit-QEEA}
\subsection{Sub-circuit Implementation with Local Controls}\label{subsec:subcicuit-soln-QEEP}

In this section we propose an alternative protocol to obtain $g_t$ using several geometrically local control ancillas in place of a single globally connected ancilla as described in \cite{Somma2019}.

We consider a reasonably general case. Suppose we wish to solve \nameref{prob:qeep} for a $k$-local qubit Hamiltonian, $H = \sum_{i} \alpha_i h_i$. As we are interested in the NISQ era algorithms we will assume that when laid out on hardware, every Pauli interaction $h_i$ acts non-trivially on geometrically local qubits. Furthermore we assume that the device has sufficient connectivity so that we can introduce a geometrically local ancilla qubit for each of these interactions. We denote these ancillas by dashed indices, $i'$, in contrast to the data qubits with undashed indices. This is illustrated for a $2$-local Hamiltonian with two terms in \cref{fig:local-time-series}.

Over the joint Hilbert space we define an augmented Hamiltonian
\begin{align}
H' \coloneqq \sum_{i=1} \alpha_i  Z_{i'} \otimes h_i .
\end{align}
To obtain time series data $g_t$ from this augmented Hamiltonian, we first prepare a cat state on the control ancillary space, $C$. For $\rho_I= \ketbra{\psi}_I$ the overall input is
\begin{align}
\ket{\psi(0)}_{CI} =  \frac{1}{\sqrt{2}} \big( \ket{00 \cdots 0}_C + \ket{11 \cdots 1}_C\big)\otimes \ket{\psi}_I.
\end{align}
On this initial state, the cat state induces a sign flip on the augmented Hamiltonian $H'$, conditioned on the cat state to be in state $\ket{1}_C$; more precisely, applying the evolution operator $U'_c(t) \coloneqq \exp(\ii t H'/2)$ we get
\begin{align}
	 \ket{\psi(t)}_{CI} &= \frac{1}{\sqrt{2}} \left( \ket{00 \cdots 0}_C \ee^{\ii t H/2} \ket{\psi}_I + \ket{11 \cdots 1}_C \ee^{- \ii t H/2} \ket{\psi}_I \right),
\end{align}
as can be readily verified.
By measuring the expectation value $g_t = \langle X_{1'} + \ii Y_{1'} \rangle_{\rho(t)}$ for $\rho(t) = \ketbra{\psi(t)}_{CI}$ we obtain the time series data $g_t$, as desired.

\begin{figure}
\begin{minipage}{.5\linewidth}
\centering
\begin{tikzpicture}
    \node (a1) at (-0.01,0.6) {};
    \node (v2) at (1.8,-0.2) {};
    \node (v1) at (0,0) {};
    \node (v3) at (1.25,0.3) {};
    \node (v4) at (2.4,0.3) {};

    \begin{scope}[fill opacity=0.8]

    \filldraw[fill=gray!40] ($(v1)+(-1.0,0.2)$)
        to[out=90,in=180] ($(v1)+(0,1)$)
        to[out=0,in=90] ($(v1)+(0.6,0.3)$)
        to[out=270,in=0] ($(v1)+(0,-0.6)$)
        to[out=180,in=270] ($(v1)+(-1.0,0.2)$);

   \filldraw[fill=gray!40] ($(v2)+(-1,0.2)$)
        to[out=90,in=180] ($(v2)+(0,1)$)
        to[out=0,in=90] ($(v2)+(1.2,0.3)$)
        to[out=270,in=0] ($(v2)+(0,-0.6)$)
        to[out=180,in=270] ($(v2)+(-1,0.2)$);

    \end{scope}

    \foreach \v in {1,2,3,4} {
        \fill (v\v) circle (0.1);
    }

    \fill[fill=red] (a1) circle (0.1) node [left] {$Z_{1'}$};
    \fill (v1) circle (0.1) node [left] {$X_a$} ;
    \fill[fill=red] (v2) circle (0.1) node [below] {$Z_{2'}$};
    \fill (v3) circle (0.1) node [right] {$Z_b Z_c$};
    \fill (v4) circle (0.1);

\end{tikzpicture}
\end{minipage}
\begin{minipage}{.5\linewidth}
	\begin{quantikz}
		\lstick{$\ket{0}_{1'}$} & \gate{H} & \ctrl{1} & \gate[5]{\ee^{\ii \frac{t}{2} H'}} & \ctrl{1} & \meter{} \\
		\lstick{$\ket{0}_{2'}$} & \qw  & \targ{}   &   & \targ{}  & \qw  \\
		\lstick[3]{$\rho_{abc}$} & \qw  & \qw    & \qw  & \qw  & \qw   \\
						   & \qw  & \qw    & \qw & \qw  & \qw    \\
						   & \qw  & \qw    & \qw  & \qw  & \qw \\
	\end{quantikz}
\end{minipage}%
\caption{Consider a $2$-local Hamiltonian acting on three qubits $H = X_a + Z_b Z_c$. The left-hand figure illustrates the geometric connectivity of this Hamiltonian. The qubits (black) which are geometrically connected are shown in a shared loop. It also shows the requisite geometric connectivity of the ancilla qubits (red) needed to define $H' = Z_{1'} X_a + Z_{2'} Z_b Z_c$. The right-hand figure then shows the circuit used to obtain $g_t$.}\label{fig:local-time-series}
\end{figure}
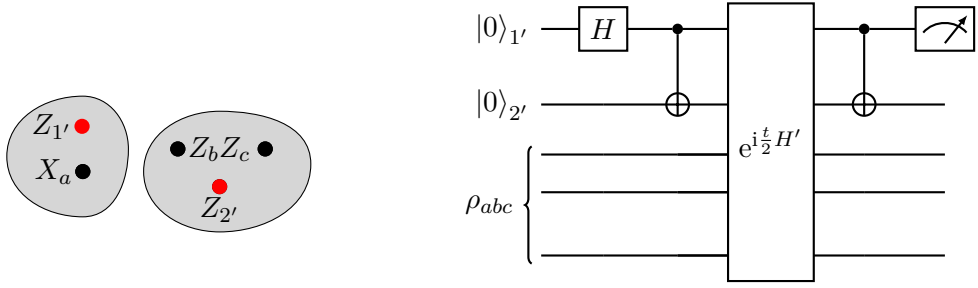

\subsection{Asymptotic Run-Time Analysis} \label{subsec:runtime}
In this section we will establish the run-time of the single most costly experiment involved in solving \nameref{prob:qeep} for $H=\sum_i \alpha_i H_i$. That is we will calculate the run-time---specified by \cref{def:time-cost}---associated with the protocol to obtain $g_T$ outlined in \cref{subsec:subcicuit-soln-QEEP}.

In previous work \cite{Clinton2020} we establish tighter bounds on the run-time of implementing $U(T) = \ee^{- \ii T H}$ where  $H = \sum_i \alpha_i h_i$ is a $k$-local Hamiltonian with $\alpha_i \leq 1$ and $\|h_i\| = 1$. First we decompose $U(T)$ via a Trotter approximation, that is set $U(T) \approx (\prod_i \ee^{\ii \delta \alpha_i h_i})^{T/\delta} + \BigO(T \delta)$, and then use the sub-circuit synthesis techniques given in \cite{Clinton2020} to implement each local Trotter step $\ee^{\ii \delta \alpha_i h_i}$ with sub-circuit run-time $\cost(\ee^{\ii \delta \alpha_i h_i})\approx \BigO(\delta^{1/(k-1)})$.\footnote{This is an abuse of notation as there are many circuits which implement a given $U$. Therefore $\cost(U)$ is not well defined, however if it is clear from context what circuit we are using to decompose $U$ we write $\cost(U)$ for readability.}

In order to perform \nameref{prob:qeep} with local control as described in the previous section we must instead implement $U'_{c}(T) = \ee^{- \ii T H'}$ where  $H' = \sum_i \alpha_i Z_{i'}  h_i$ in the manner described above. Now the run-time of each local Trotter step is $\cost(\ee^{\ii \delta \alpha_i Z_{i'}  h_i})\approx \BigO(\delta^{1/k})$. As there are $T/\delta$ Trotter steps this means
\begin{align}\label{eq:time-cost-delta}
\cost(U'_{c}(T)) \leq \BigO\left( T \delta^{\frac{1-k}{k}} \right).
\end{align}
This diverges as $\delta \rightarrow 0$. Therefore, we must choose the largest $\delta$ for a given Trotter error $\epsilon$.\footnote{Here we are take the Trotter error to be the same as the precision to which we implement \nameref{prob:qeep}.} That is we choose the value of $\delta$ which saturates the inequality $\epsilon \geq \BigO(T \delta)$. Substituting this into \cref{eq:time-cost-delta} gives
\begin{align}\label{eq:time-cost}
\cost(U'_{c}(T)) \leq \BigO \left(T^{1 + \frac{k-1}{k}} \epsilon^{-\frac{k-1}{k}} \right).
\end{align}

As discussed in \cref{subsec:altInd}, $T$ will depend in the bin-width $\eta$ and precision. We will use the bound \cref{eq:T-bound-2} to determine this, meaning
\begin{align}\label{eq:time-full}
\cost(U'_{c}(T)) \leq \BigO \left(T^{1 + \frac{k-1}{k}} \epsilon^{-\frac{k-1}{k}} \right) \quad \text{with} \quad T \rightarrow \frac{\pi}{\eta} \times \frac{\ee^{\pi \eta \epsilon/2}}{\sqrt{\ee^{\pi \eta \epsilon}-1}}.
\end{align}
This run-time does not take any connectivity assumptions into account, however we will address the issue of connectivity in our numeric benchmarking of this approach in \cref{subsec:numerics}.
As we are interested in the NISQ regime we will focus on
a numerical analysis of a specific example Hamiltonian, the $2$D spin Fermi-Hubbard Hamiltonian, rather than an asymptotic analysis of the general algorithm.

\subsection{Numerical Results for the Fermi-Hubbard Model}\label{subsec:numerics}
\begin{figure}
	\hspace*{-1.5cm}\vspace{-.5cm}
	\begin{minipage}{18cm}
		\includegraphics[width=\textwidth]{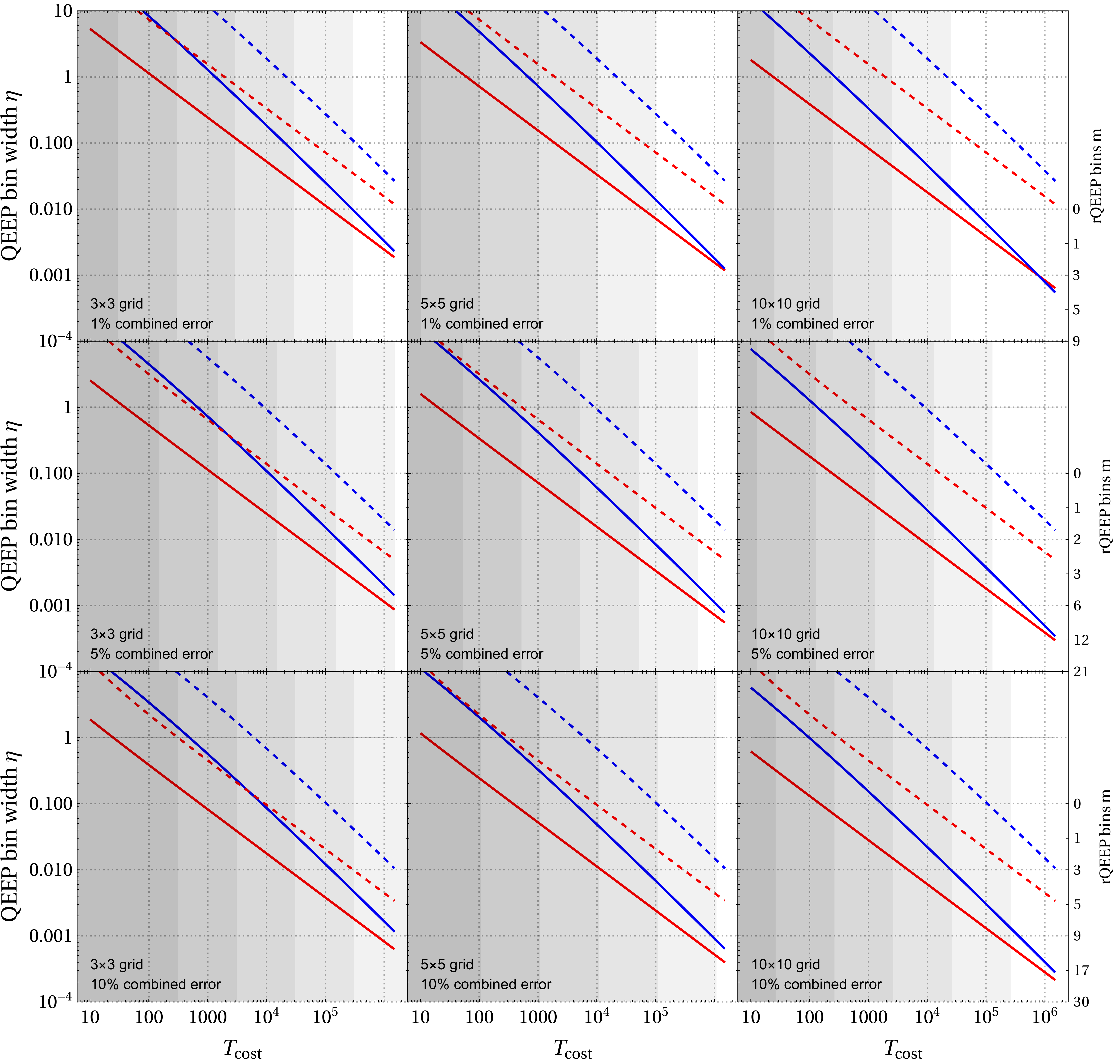}\\[1mm]
	\end{minipage}
	\caption{A comparison of the impact of using standard (blue) vs.~sub-circuit (red) synthesis methods on the achievable \nameref{prob:qeep} bin-width $\eta$ for a $3\times3$ (left column), $5\times5$ (middle column) and $10\times10$ (right column) 2D  spin Fermi-Hubbard Hamiltonian.
		The right scale alternatively shows the number of achievable \eighprob number of bins $m$, as given in \cref{eq:m}.
		Solid lines are the novel indicator function from \cref{subsec:altInd}, dashed lines are Somma's original one.
		The combined QEEP $\epsilon$ and analytic Trotter error $\epsilon_{t}$ results in a total analytic error $( \epsilon^2 + \epsilon_{t}^2)^{1/2} $.
		The shaded backgrounds indicate the amount of error due to stochastic noise; the thresholds are chosen such that they coincide with the combined Trotter and phase estimation errors (so $10$, $5$ or $1\%$ additional error due to stochastic noise, respectively). The shading indicates depolarizing noise, from $q=10^{-5}$ (darkest) to $10^{-9}$ (lightest). For $\epsilon = 0.01, \ 0.05$ and $0.1$ we find that $\Delta/ 2^{Q} = 0.16, \ 0.8$ and $1.6$ respectively.}
	\label{fig:main}
\end{figure}

We will take the following problem instance; a \EighProb input Hamiltonian which is a two-dimensional $L \times L$ lattice Fermi-Hubbard model with spin.
This Hamiltonian is made up of on-site interactions and nearest neighbour hopping terms
\begin{align}\label{eq:FH-H}
H &\coloneqq  \sum_{i=1}^{L^2} h_{\text{on-site}}^{(i)} \ + \sum_{i<j,\sigma} h_{\text{hopping}}^{(i,j,\sigma)} \\
&\coloneqq  u \sum_{i=1}^{L^2}   a^{\dagger}_{i \uparrow} a_{i \uparrow} a^{\dagger}_{i \downarrow} a_{i \downarrow}  +  v \sum_{i<j,\sigma}   \left(a^{\dagger}_{i \sigma} a_{j \sigma} + a^{\dagger}_{j\sigma} a_{i \sigma}\right)
\end{align}
which describe electrons; and with spin $\sigma =\ \uparrow$ or $\downarrow$ hopping between neighbouring sites on a lattice, with an on-site interaction between opposite-spin electrons at the same site.
We will assume that this fermionic Hamiltonian has been encoded into a qubit Hamiltonian using the compact fermion encoding \cite{derby2020}. This gives us a $4$-local qubit Hamiltonian, as the lowest on-site term (originally 2-local) and the hopping terms (originally 3-local) now need to be implemented with an extra controlling qubit.
To work out the individual gate costs, we reference \cite[Appendix B]{Clinton2020}, and for an application of either term to time $\delta$ we obtain
\begin{align}
    \text{3-local on-site:}&\quad g_3(u, \delta) = g_3(u\delta) = \frac{|u \delta|}{2} + 2 \sqrt{2 |u \delta|}\\
    \text{4-local hopping:}&\quad g_4(v, \delta) = g_4(v\delta) = 12 \sqrt[3]{2 |v \delta|}
\end{align}
for a first order Trotter formula.
We also remark that the compact encoding requires a single layer of the on-site terms, but 4 layers for the hopping terms; the total cost for a single Trotter step with time $\delta$ we have $g_3(u\delta) + 4g_4(v\delta)$.

 We must first pindown the geometric connectivity assumptions we make as these play a critical role in determining the run-time of an algorithm on a NISQ device. Here we make one of the most limiting assumptions of our analysis. If we are to avoid all SWAP overhead in the example we are about to consider, then we must assume the device consists of two stacked and connected 2D grids of qubits with nearest neighbour connectivity. This is shown in \cref{fig:connectivity}. Additionally we assume the noise model explained in \cref{subsec:error-models} applies to this device and use \cref{eq:stoch-noise-bound} to calculate the run-time region where the combined stochastic and analytic error remains bounded by $\epsilon_{tar}$ for a variety of noise parameters $q$.
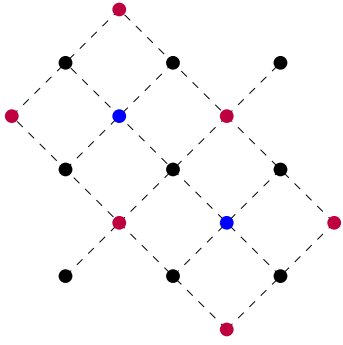
\begin{figure}
	\centering
	\begin{tikzpicture}[scale=1,rotate=45]

	\draw[dashed][step=1cm,black,very thin] (0,0) grid (2,4);
	\draw[dashed][step=1cm,black,very thin] (2,2)--(3,2);
	\draw[dashed][step=1cm,black,very thin] (-1,2)--(0,2);

	\foreach \x in {1}{
		\foreach \y in {0,2,4}{
			\node at (\x,\y)[fill=black,circle,scale=0.5]{};
		}
	}

	\foreach \x in {0,2}{
		\foreach \y in {1,3}{
			\node at (\x,\y)[fill=black,circle,scale=0.5]{};
		}
	}

	\foreach \x in {-1,3}{
		\foreach \y in {2}{
			\node at (\x,\y)[fill=black,circle,scale=0.5]{};
		}
	}

	\foreach \x in {0,2}{
		\foreach \y in {0,2,4}{
			\node at (\x,\y)[fill=purple,circle,scale=0.5]{};
		}
	}

	\foreach \x in {1}{
		\foreach \y in {1,3}{
			\node at (\x,\y)[fill=blue,circle,scale=0.5]{};
		}
	}

	\end{tikzpicture}
	\caption{The ancilla qubits and connectivity required to perform \nameref{prob:qeep} on a single spin sector of a $L \times L$ Fermi-Hubbard Hamiltonian for $L=3$. We use the compact encoding to encode a $L \times L$ Fermi-Hubbard Hamiltonian. This requires $2[L^2 + (L-1)^2 +2(L-1)]$ qubits in total. For a single spin layer as shown here, the compact encoding already requires $(L-1)^2/2$ ancilla qubits to encode the fermionic Hamiltonian, these are shown in blue. An additional $(L-1)^2/2 + 2(L-1)$ ancilla qubits to implement \nameref{prob:qeep} with local control as described in \cref{subsec:subcicuit-soln-QEEP}, these are shown in purple.}\label{fig:connectivity}
\end{figure}

Now we will proceed to analyse \nameref{prob:qeep} and then \eighprob for a specific problem instance assuming this device connectivity. We will assume that our device has a given run-time budget before it is overcome by noise. We will determine numerically how the minimum obtainable bin-width $\eta$ goes as a function of this run-time budget for various values of the target precision. That is we will perform the numerical equivalent of the asymptotic analysis of \cref{subsec:runtime} but solving \cref{eq:time-full} to get $\eta = \eta(\cost, \epsilon)$.

In order to solve \nameref{prob:qeep} with a bin-width of $\eta$ we must rescale $H$ in order to apply the bound \cref{eq:T-bound-1} or \cref{eq:T-bound-2}.
For $H$ on an $L\times L$ lattice, we know that $\|H\| \leq 5\Lambda$, where $\Lambda$ is the number of fermions in the simulation, see \cite[Sec.~E.4]{Clinton2020}; we thus need to rescale the maximum evolution time by a factor $1/10\Lambda$.
The maximum evolution time $T$ is then bounded by
\begin{align}
	\text{\cref{eq:T-bound-1}}:&\quad   T \ge \frac{\pi}{10\eta\Lambda} \times \frac{\ee^{\pi \eta \epsilon/2}}{\sqrt{\ee^{\pi \eta \epsilon}-1}}, \\
    \text{\cref{eq:T-bound-2}}:&\quad   T \ge \frac{1}{5\eta\Lambda}\left(1+\operatorname{plog}_{-1}\left( - \frac{\epsilon\eta}{32e} \right)\right)^2,
\end{align}
where we are considering two possible choices of $f(\omega)$ with each choice leading to a different bound on the maximum evolution time as derived in \Cref{app:TBound}.
We assume $U'_c(T)$ is implemented by a $p$\textsuperscript{th}-order Trotter formula using $5$ layers. Each layer is made up of mutually commuting interactions which are also disjoint. We assume the NISQ device can implement these interactions in parallel (See \cite{Clinton2020} for more details). The function $\eta = \eta(\cost, \epsilon)$ has a further dependence on which order Trotter formula is used to implement $U'_c(T)$ as this determines $\cost$. However in our numerical analysis we will simply show the minimum $\eta$ obtained by varying over $p=1, 2$ or $4$. Furthermore we will use the  numerical Trotter bounds obtained in \cite{Clinton2020}, based on extrapolating optimal Trotter step sizes from lattice sizes up to $3\times3$.

To benchmark the impact of our sub-circuit synthesis techniques we perform the same analysis but restricting ourselves to decomposing the local unitaries into CNOT gates and single qubit rotations. We cost these gates in the same framework, meaning that in this case the run-time of each local Trotter step equals the circuit depth (disregarding single-qubit gates) of the decompositions of the 3- and 4-local interactions using the lowest-depth gate decomposition method (via conjugation), which yields
\begin{align}
    \text{3-local on-site:}&\quad g'_3(u, \delta) = g'_3 = \frac{5 \pi}{4}\\
    \text{4-local hopping:}&\quad g'_4(v, \delta) = g'_4 =  \frac{5 \pi}{2}
\end{align}
and again that a single layer has a runtime $g'_3 + 4g'_4$. The results of these numerics are presented in \Cref{fig:main} and show that--within the parameter range considered--our asymptotically suboptimal choice of indicator function (solid lines) and synthesis techniques (red lines) yield improvements in the minimal achievable bin-width $\eta$ for a given depolarizing noise rate and total runtime $\cost$.

\section{Discussion and Conclusions}\label{sec:conclusions}
In this work we have investigated three different avenues of research concerning quantum phase estimation.
We recap and discuss them seperately in the following.

\paragraph{Subcircuit Analysis.} We have built on our previous work, recapped in \Cref{sec:prelim}, where we defined a non-standard circuit error model from an abstraction of the capabilities of NISQ hardware and then introduced gate synthesis techniques tailored to this framework. Here we have applied this framework to the problem of \nameref{prob:qeep}---introduced in \cite{Somma2019}---to determine whether our gate synthesis techniques yield any improvements within this circuit error model. 

In order to do this we have outlined a simple implementation of \nameref{prob:qeep} in \Cref{sec:subcircuit-QEEA} aimed at minimizing runtime by performing controlled unitary evolution using only subcircuit gates. This \nameref{prob:qeep} implementation preserves any potential subcircuit gains, shown in \cite{Clinton2020}, by exchanging the global control ancilla of the \nameref{prob:qeep} implementation in \cite{Somma2019} with local control ancillas and the creation of a global CAT state, in order to
highlight its runtime minimizing properties within our
minimize the runtime in the per-time error model. 

It is essentially true by the definition of our per-time error model that our sub-circuit synthesis techniques will have an \emph{asymptotic} advantage over standard synthesis techniques in this framework. Furthermore, as this per-time error model is targetting NISQ-era algorithms it would also be misleading to confine ourselves the simple asymptotic analysis of \Cref{subsec:runtime}. For both of these reasons we have anchored our conclusions in a numerical analysis of the 2D spin Fermi-Hubbard model. We have considered how implementing \nameref{prob:qeep} with local control relates to the connectivity of a NISQ device (see \Cref{fig:connectivity}) and then perform numerics to obtain the results shown in \Cref{fig:main}. These plots demonstrate a numeric advantage for our synthesis techniques within a per-time error model. Modulo the applicability of a per-time error model, these techniques may allow one to perform \nameref{prob:qeep} on an $L=3$ spin Fermi-Hubbard model with a bin width of $\eta=0.1$ and $\epsilon=0.1$ if given access to a 3D architecture with the connectivity of \Cref{fig:connectivity} with a depolarizing noise rate of $q=10^{-6}$. In comparison, we estimate standard synthesis techniques would require  $q=10^{-7}$. We emphasize that this estimate is equivalent to a standard circuit error model, as all the gates in this comparison are costed as $\BigO(1)$ in our framework as explained in \Cref{sec:prelim}.

This encapsulates the first avenue of research explored in this paper, a development and application of our work in \cite{Clinton2020} to a assess the possibility of performing \nameref{prob:qeep} in the NISQ-era.

\paragraph{Reformulating \nameref{prob:qeep}.}
Secondly, we have reframed the \nameref{prob:qeep} protocol presented in \cite{Somma2019} as an algorithm which allows one to sample from a convolution between the spectrum of a Hamiltonian, $G(\omega)$ and a freely chosen function $f(\omega)$.\footnote{Freely chosen from functions which satisfy the conditions explained in \Cref{subsec:qeep}} This is explained in \Cref{subsec:qeep} and specifically via \Cref{eq:qeep-as-conv}.  By recasting the problem and algorithm in this way we hope to clarify which aspects we can modify and adapt to different computational problems.
	
Most trivially our analysis decouples the precision parameter $\epsilon$ from the bin-width $\eta$. However we have also explored the possibility of tailoring our choice of $f(\omega)$ to problem and parameter regime of interest, in our case an instance of \nameref{prob:qeep} with a Fermi-Hubbard Hamiltonian in a NISQ regime.
	
We define one such alternative indicator function in \Cref{app:TBound} and compare the resultant lower bound on the maximum evolution time $T$ with the original indicator function of \cite{Somma2019} in \Cref{fig:T-bound}. We emphasize that the original function gives asymptotically superior results, as it was intended to. Despite this, within the non-asymptotic parameter range of \Cref{fig:main} we can see that for both standard and subcircuit synthesis techniques our bound gives improvements in the minimum obtainable $\eta$. For example considering standard gate synthesis techniques, for an $L=3$ spin Fermi-Hubbard model \Cref{fig:main} shows that for $\epsilon =0.1$ and for a target bin width of $\eta = 0.1$, our alternative indicator function improves the noise requirements by an order of magnitude. 

As we see from the above, the optimal choice of indicator function may depend on the task at hand. It is also possible that other applications might benefit from taking non-uniform samples from the time series $g(t)$ and employing more sophisticated classical signalling processing methods to reconstruct $p(\omega) = [G*f](\omega)$. By formulating \nameref{prob:qeep} as a protocol for sampling from a $p(\omega)$ we hope to make it easier to explore these questions in future work. 

\paragraph{rQEEP.}
Finally, we suggest an application of \nameref{prob:qeep} having thus reformulated it as a general tool. We consider a problem analogous to matlab's \texttt{eig} or Mathematica's \texttt{Eigenvalues}. This was outlined in \Cref{subsec:generic-QEEA-def} and in \Cref{subsec:Generic-QEEA-alg} we demonstrate how to solve it using \nameref{prob:qeep} as a subroutine. We also incorporate this problem into our numerics as shown in \Cref{fig:main} and find that it is more demanding than performing \nameref{prob:qeep} alone---as expected, due to the superior guarantees \eighprob gives on the placement of eigenvalues within each bin.

\newpage

\printbibliography

\newpage

\appendix
\section{Run-time of rQEEP}\label{ap:rQEEP-proofs}
Here we prove \Cref{thm:runtime-ap}. We first note that if $y\lwr_i = y\upr_i$ for all $i$---which means that no eigenvalues were detected to lie in the shaded region shown for a single bin in \Cref{fig:rQEEP-envelopes}---the algorithm would return an answer to the \EighProb.
The issue is that we cannot be certain that $H$ has no eigenvalues in these regions; and furthermore, since the $q_i$ are \emph{approximate} (since they include Fourier truncation and measurement errors) it is unlikely that any of them will be precisely zero.

However, we can estimate with what likelihood any of $H$'s eigenvalues lie these regions.
As these ambiguous shaded regions are centred around the $x_i$'s, we can derive an expected number of eigenvalues that will lie within the ambiguous region. We use the following lemma:
\begin{lemma}\label{lem:exp}
	Let the setup be as in the \eighprob algorithm outlined in \Cref{sec:generic-QEEA} but given an ideal \nameref{prob:qeep} solver with $\epsilon=0$ or equivalently $\mathbf{p}=\mathbf{q}$.
	Then
	\begin{align}
		\Expectation\left[\sum_i(y\upr_i - y\lwr_i)\right] \le 2^Q \times 6\eta(m+1)^2.
	\end{align}
\end{lemma}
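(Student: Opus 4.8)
The plan is to unpack the left-hand side into a sum of contributions localised near the \emph{random} bin boundaries, and then bound each contribution by the ``volume'' (probability) argument advertised in the caption of \Cref{fig:rQEEP-envelopes}. First I would rewrite each summand: since $R\lwr_i \subseteq R\upr_i$ and the $q_j$ are non-negative,
\begin{align}
  y\upr_i - y\lwr_i = 2^Q \sum_{w_j \in R\upr_i \setminus R\lwr_i} q_j ,
\end{align}
and from $R\upr_i = [x_i-\eta, x_{i+1}+\eta]$, $R\lwr_i = [x_i+\eta, x_{i+1}-\eta]$ the symmetric difference is exactly the two ``shaded'' intervals $N(x_i)\coloneqq[x_i-\eta,x_i+\eta]$ and $N(x_{i+1})$, each of width $2\eta$ and centred on a bin boundary. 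The key structural observation I would stress is that the outputs $q_j$ are produced by \nameref{prob:qeep} on the \emph{fixed} grid $w_j=-1/2+j\eta$ and therefore do \emph{not} depend on the random abscissas $x_k$; all the randomness sits in which of the fixed $w_j$ fall into the neighbourhoods $N(x_k)$.

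The core estimate is then a one-line volume argument for a single random abscissa. Each $x_k$ is uniform on a segment $S_k$ of width $\xi/2 = 1/m$, so for a fixed bin centre $w_j$ we have $\Prob[w_j \in N(x_k)] = \Prob[|x_k-w_j|\le\eta] \le 2\eta/(\xi/2) = 2\eta m$. Because the $q_j$ are deterministic and non-negative, linearity gives
\begin{align}
  \Expectation\Big[\sum_{w_j \in N(x_k)} q_j\Big] = \sum_j q_j\,\Prob[w_j \in N(x_k)] \le 2\eta m \sum_j q_j \le 2\eta m ,
\end{align}
where the final inequality uses the normalisation $\sum_j q_j \le 1$. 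This normalisation is where the partition-of-unity property of the indicator enters: with $\mathbf q=\mathbf p$ and the maximally mixed input, $\sum_j q_j = 2^{-Q}\sum_\lambda\sum_j f(\lambda-w_j) \le 2^{-Q}\sum_\lambda 1 = 1$, since $f(w)+f(w-\eta)=1$ forces $\sum_j f(\lambda - w_j)\le 1$ for every $\lambda$.

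Combining the two steps and summing over the $m+1$ bins $i=0,\ldots,m$, each bin contributes two boundary neighbourhoods, so each contributes at most $2^Q\cdot 2\cdot 2\eta m$ in expectation, giving
\begin{align}
  \Expectation\Big[\sum_i (y\upr_i - y\lwr_i)\Big] \le 2^Q (m+1)\,4\eta m \le 2^Q\,6\eta (m+1)^2 ,
\end{align}
where the last inequality is just $4m \le 6(m+1)$, which holds for all $m\ge0$.

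The step I expect to be the main obstacle is the bookkeeping at the two fixed outer boundaries $x_0=-1/2$ and $x_{m+1}=1/2$. These are not random, so the volume bound does not apply to them, and a naive accounting would add a spurious $\Theta(2^Q)$ term coming from eigenvalues that cluster at the spectral edges (for instance $H=-\tfrac12 I$ puts $q_0$ close to $1$), which would defeat the bound for small $\eta$. The clean resolution I would use is that, since $\operatorname{spec}(H)\subseteq[-1/2,1/2]$ and the $w_j$ populate only this interval, the upper and lower envelopes coincide at the hard spectral edges, so $x_0$ and $x_{m+1}$ contribute no ambiguous mass and only the $m$ genuinely random boundaries enter the sum. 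The remaining care is routine: making the $\sum_j q_j\le 1$ normalisation precise and checking the ``at most twice'' multiplicity when neighbourhoods of adjacent boundaries overlap, neither of which affects the final constant.
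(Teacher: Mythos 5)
Your core estimate is correct and is in essence the same volume-plus-linearity argument as the paper's, just organised dually. The paper fixes an eigenvalue $\lambda$ and bounds by $6\eta/\xi$ the probability that the ambiguous region attached to a random boundary covers it---that region has width $3\eta$ there, because the paper widens the two ambiguous bin centres by the support of $f$. You instead fix a bin centre $w_j$, bound by $2\eta/(\xi/2)=2\eta m$ the probability that it lands in the width-$2\eta$ neighbourhood $N(x_k)$, and pay for the change of viewpoint through the normalisation $\sum_j q_j\le 1$, which does follow from the partition-of-unity property and the maximally mixed input (for both of the paper's indicator functions, which vanish at $\pm\eta$). Your structural remark that $\mathbf q$ lives on the fixed grid $w_j$ and is independent of the random abscissas is exactly the justification the paper leaves implicit, and the constants close the same way. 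Up to the boundary issue below, this is the paper's proof.

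The genuine gap is your resolution of the fixed outer boundaries, which is factually wrong under the paper's definitions: the envelopes do \emph{not} coincide at the spectral edges. Indeed $w_0=-1/2=x_0$ lies in $R\upr_0=[x_0-\eta,\,x_1+\eta]$ but not in $R\lwr_0=[x_0+\eta,\,x_1-\eta]$, and symmetrically $w_M$ is deterministically within $\eta$ of $x_{m+1}=1/2$ since $(M+1)\eta>1$. So your own test case $H=-\tfrac12 I$ gives $y\upr_0-y\lwr_0=2^Q q_0=2^Q f(0)=2^Q$ deterministically (both of the paper's indicator functions have $f(0)=1$), exceeding $2^Q\times 6\eta(m+1)^2$ for small $\eta$: the obstacle you flagged is real, but it cannot be argued away---as literally stated the lemma fails in this corner case, and the paper's own proof is silent about it (it only treats the ambiguous regions between consecutive lower envelopes, i.e.\ those centred on the random $x_1,\ldots,x_m$). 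A correct repair must change something: either redefine the edge envelopes, e.g.\ $R\lwr_0\coloneqq[x_0,\,x_1-\eta]$ and $R\lwr_m\coloneqq[x_m+\eta,\,x_{m+1}]$---sound because, by the spectral promise, mass in $q_0$ comes from $\lambda\in[-1/2,\,-1/2+\eta]$, which lies in $B_1$ except on the random event $x_1<\lambda$, already chargeable to the volume bound at $x_1$---or assume (or rescale so) that $\operatorname{spec}(H)\subseteq[-1/2+\eta,\,1/2-\eta]$. With either repair, your argument for the $m$ genuinely random boundaries goes through and yields the stated bound.
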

\begin{proof}
	It is easy to see that any ambiguous region between the two envelopes $R\lwr_i$ and $R\lwr_{i+1}$ has width $2\eta$, and hence leaves room for up to two bin centers $w_j$, let us call them $w_a$ and $w_b$.
	Neither of the $q_a,q_b$ will be counted in $y\lwr_i,y\lwr_{i+1}$, but both will be counted in $y\upr_i,y\upr_{i+1}$.
	This means that the ambiguous region around $w_a,w_b$ has width $3\eta$ (left support end of $f(w_a)$ till right support end of $f(w_b)$).
	Noting that overlapping ambiguous regions (e.g.\ when $x_i$ and $x_{i+1}$ are very close) just reduce the overall ambiguous area, the remainder of the proof is a volume argument: if we consider a \emph{fixed} index $i$ and look at the ambiguous regions to the right (associated with $x_{i}$) and the left (associated with $x_{i-1}$), the position of right region  is chosen uniformly at random from $S_{i}$ and  position of left region is chosen uniformly at random from $S_{i-1}$. For each eigenvalue and ambiguous region we can define a function, $\chi_{i,\lambda}(x_{i}) \in \{1,0\}$, which indicates whether the ambiguous region associated with $x_i$ contains the eigenvalue $\lambda$. Now we consider the random variables $\chi_{\lambda}(x_{i})$ and $\chi_{\lambda}(x_{i - 1})$ where we have made the $i$ subscript implicit through the variable $x_i$. As each ambiguous region is chosen from a segment of width $\xi/2$ the probability that $\chi_{\lambda}(x_{i})=1$ is upper bounded by $6 \eta/\xi$ and likewise for $\chi_{\lambda}(x_{i - 1})$. As expectation values are linear even for correlated random variables we can say that
	\begin{align}
		\Expectation\left[\chi_{\lambda}(x_{i - 1})\right] + \Expectation\left[\chi_{\lambda}(x_{i})\right] =\Expectation\left[\chi_{\lambda}(x_{i - 1}) + \chi_{\lambda}(x_{i})\right] \leq \frac{12 \eta}{\xi},
	\end{align}
	and by the same argument that
	\begin{align}
		\Expectation\left[n\upr_i - n\lwr_i\right] = \Expectation\left[\sum_{\lambda} \left(\chi_{\lambda}(x_{i - 1}) + \chi_{\lambda}(x_{i})\right)\right] \leq 2^Q \times \frac{12 \eta}{\xi}.
	\end{align}
	Jointly, the position of the ambiguous intervals is of course not uncorrelated, since we choose each within a constrained segment; but again we use the fact that expectation values are linear, and so we have
	\begin{align}
		\sum_{i=1}^{m+1} \Expectation\left[y\upr_i - y\lwr_i\right] =  \Expectation\left[\sum_{i=1}^{m+1}\left(y\upr_i - y\lwr_i\right)\right] \leq 2^Q \times 6\eta(m+1)^2
	\end{align}
	The claim follows.
\end{proof}
\begin{theorem}\label{thm:runtime-ap}
	If assume access to a \nameref{prob:qeep} solver with quantum runtime $\cost(Q,\eta,\epsilon,c)$ then the following holds. For a point spacing $\xi > 0$ (or equivalently $m := 2/\xi$), deviation $\Delta\ge0$, and confidence $c$, the overall quantum runtime to answer \eighprob using \eighprob is 
	\begin{align}
		\cost\left(Q,\frac{1}{12 (m+1)^2} \times \frac{\Delta}{2^Q},\frac{1}{16} \times \frac{\Delta}{2^Q},c \right),
	\end{align}
	and the additional classical overhead is $\log_{(1-\frac{c}{2})}(1-c)$.
\end{theorem}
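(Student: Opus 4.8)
The plan is to control the total deviation $\sum_i |y_i - n_i|$ by splitting it into a \emph{placement-ambiguity} term, governed by \cref{lem:exp}, and a \emph{QEEP-approximation} term, governed by the precision $\epsilon$. The stated runtime is then immediate: substituting the algorithm's choices $\eta = \Delta/(6\cdot 2^{Q+1}(m+1)^2)$ and $\epsilon = \Delta/(8\cdot 2^{Q+1})$ into $\cost(Q,\eta,\epsilon,c)$ and using $2^{Q+1}=2\cdot 2^Q$ collapses the arguments to $\frac{1}{12(m+1)^2}\cdot\frac{\Delta}{2^Q}$ and $\frac{1}{16}\cdot\frac{\Delta}{2^Q}$, exactly as claimed. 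So the real work is the correctness and confidence bound.

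First I would analyse the idealized case $\mathbf q = \mathbf p$ (tildes denoting ideal values). The key structural fact is that the indicator functions form a partition of unity in the interior: since $f$ is supported on $[-\eta,\eta]$ with $f(w)+f(w-\eta)=1$ on $[0,\eta]$, summing $2^Q p_j = \sum_\lambda f(\lambda-w_j)$ over the contiguous block $w_j\in R\lwr_i$ contributes exactly $1$ per eigenvalue lying in the interior and a value in $[0,1]$ at the edges. The union of supports of the bins in $R\lwr_i=[x_i+\eta,x_{i+1}-\eta]$ is precisely $B_{i+1}=[x_i,x_{i+1}]$, while that of $R\upr_i=[x_i-\eta,x_{i+1}+\eta]$ contains $B_{i+1}$ with margin $\eta$; this yields $\tilde y\lwr_i \le n_i \le \tilde y\upr_i$, so the midpoint $\tilde y_i$ obeys $|\tilde y_i-n_i|\le \tfrac12(\tilde y\upr_i-\tilde y\lwr_i)$ and hence $\sum_i|\tilde y_i-n_i|\le \tfrac12\sum_i(\tilde y\upr_i-\tilde y\lwr_i)$. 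Applying \cref{lem:exp} with the chosen $\eta$ collapses the right-hand side in expectation to $\tfrac12\cdot\tfrac{\Delta}{2}=\Delta/4$. Reinstating the QEEP error, the $y\lwr_i,y\upr_i$ differ from their ideal values only through $\mathbf q-\mathbf p$; since the $R\lwr_i$ are disjoint and each bin lies in at most two $R\upr_i$, a triangle inequality gives $\sum_i|y\lwr_i-\tilde y\lwr_i|\le 2^Q\|\mathbf q-\mathbf p\|_1\le 2^Q\epsilon$ and $\sum_i|y\upr_i-\tilde y\upr_i|\le 2\cdot 2^Q\epsilon$. With $\epsilon=\Delta/(16\cdot 2^Q)$ this contributes only $\tfrac32\cdot 2^Q\epsilon=3\Delta/32$, so conditioned on a successful QEEP call, $\sum_i|y_i-n_i|\le \tfrac12\sum_i(\tilde y\upr_i-\tilde y\lwr_i)+3\Delta/32$.

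For the confidence I would exploit that the QEEP instance is \emph{independent} of the random placement: its bins $w_j$ are fixed by $\eta$ and do not depend on the $x_i$. A QEEP call succeeds with probability $\ge c$, and Markov applied to \cref{lem:exp} shows the ideal ambiguity exceeds $\Delta$ with probability $\le (\Delta/2)/\Delta=1/2$. By independence, a single run is simultaneously QEEP-correct and has ambiguity $\le\Delta$ — hence deviation $<\Delta$ — with probability $\ge c/2$. Repeating the subroutine $\log_{(1-c/2)}(1-c)$ times with fresh randomness then boosts the probability that at least one run is good to $1-(1-c/2)^{\log_{(1-c/2)}(1-c)}=c$, which is exactly the stated classical overhead.

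The main obstacle is the selection step: the algorithm returns the run minimizing the \emph{observed} gap $\sum_i(y\upr_i-y\lwr_i)$, and I must argue this run actually achieves deviation $\le\Delta$. A good run (QEEP-correct with ambiguity $\le\Delta$) has a small observed gap, namely its ideal ambiguity plus an $O(2^Q\epsilon)$ correction, so the selected run's observed gap is no larger; but the deviation bound above applies only to runs whose QEEP call succeeded, and a priori the minimizer could instead be a run with a spuriously small gap produced by a \emph{failed} QEEP call. Reconciling the selection rule with the per-run QEEP confidence — rather than union-bounding QEEP success over all $\log_{(1-c/2)}(1-c)$ runs — is the subtle point on which the argument turns; I expect to resolve it by tying the selection directly to the event that already defines the $c/2$ per-run success, so that the run realizing the amplified confidence is the one the algorithm outputs.
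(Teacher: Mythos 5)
Your proposal follows the paper's proof in all essentials: the same split $\sum_i|y_i-n_i|\le\sum_i|y_i-y_i'|+\sum_i|y_i'-n_i|$, the same use of \cref{lem:exp} plus Markov's inequality (at threshold $\Delta$, giving the per-run probability $1/2$), the same substitution of the algorithm's choices of $\eta$ and $\epsilon$ to read off the runtime, and the same $(1-c/2)$-amplification over $\lceil\log_{(1-c/2)}(1-c)\rceil$ repetitions. You diverge usefully on the QEEP-approximation term: the paper invokes a per-bin bound $|q_j-p_j|\le\epsilon/(M+1)$ attributed to Somma and counts at most $2\xi/\eta$ bins per $R_i^{\mathrm{upr}}$, arriving at $\Delta/2$, whereas you work directly from the $1$-norm guarantee $\|\mathbf q-\mathbf p\|_1\le\epsilon$ with a multiplicity count (each bin centre in one lower and at most two upper envelopes), arriving at $\tfrac32\,2^Q\epsilon=3\Delta/32$. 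Your version is both tighter and sounder, since the generic \nameref{prob:qeep} guarantee is only the $1$-norm bound; the evenly-spread per-bin bound holds for Somma's particular solver but does not follow from the problem statement. One caveat: if consecutive $x_i$ land within $2\eta$ of each other, a bin centre can lie in three upper envelopes, so the worst-case multiplicity is $3$, changing $3\Delta/32$ to $\Delta/8$; the conclusion survives comfortably. You also actually justify the sandwich $\tilde y_i^{\mathrm{lwr}}\le n_i\le \tilde y_i^{\mathrm{upr}}$ via the partition-of-unity property of $f$, which the paper merely asserts.

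The obstacle you flag at the end is real, but it is a gap in the paper's proof as much as in yours: step~3 returns the run minimizing the \emph{observed} gap $\sum_i(y_i^{\mathrm{upr}}-y_i^{\mathrm{lwr}})$, and a run whose QEEP call failed can produce a spuriously small gap while its envelopes bear no relation to $n_i$. The paper's proof simply computes a per-run success probability $c/2$ and amplifies, tacitly assuming the returned run is a successful one; it never connects the selection rule to the success event. A clean repair is to boost the per-call QEEP confidence so that a union bound guarantees \emph{all} $\lceil\log_{(1-c/2)}(1-c)\rceil$ calls succeed, after which only the Markov event varies across runs and the gap-minimizing run is certifiably good; your suggestion of tying selection to the success event would need something of this kind, since the algorithm cannot observe QEEP success directly. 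On this point your proposal is no less complete than the published argument, and more honest about it.
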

\begin{proof}
To prove that the algorithm does indeed answer \nameref{prob:sQEEA} is now straightforward: in every run of the subloop, \nameref{prob:qeep} succeeds with probability $c$ to produce the desired probability vector $\mathbf q$ which satisfies $|\mathbf{p}- \mathbf{q}|  \leq \epsilon$. Then if $y'_i$ is constructed from elements of $\mathbf{p}$ and $y_i$ denotes the actual output of the algorithm and is constructed from elements of $\mathbf{q}$ we have
\begin{align}
	\sum_i |y_i - n_i| &\leq \sum_i |y_i - y_i'| + \sum_i |y'_i - n_i| .
\end{align}
We first consider the second term. The envelope method for the chosen bin width produces an estimate $y'_i$ of $n_i \in [y\lwr_i, y\upr_i]$. Then using Markov's inequality and \Cref{lem:exp} we can say that with probability $\ge 1/2$ and that
\begin{align}
	\sum_i  \left(y\upr_i - y\lwr_i\right) < 2 \times \Expectation\left[\sum_i \left(y\upr_i - y\lwr_i \right)\right],
\end{align}
and therefore that with probability $\geq 1/2$ that
\begin{align}
	\sum_i |y'_i - n_i| &= \sum_i \left| \frac{y\upr_i+y\lwr_i}{2} - n_i \right| \\
	&\le \frac12 \sum_i | y\upr_i - y\lwr_i | \\
	&\le 2^Q \times 6(m+1)^2 \times \frac{\Delta}{2^{Q+1} 6 (m+1)^2}
	=\frac\Delta2.
\end{align}
We now consider the first term. From Somma's analysis we have that $\forall j \ |q_j - p_j| \leq \epsilon/(M+1)$, and by definition $M = \lfloor 1/\eta \rfloor$ so we have
\begin{align}
	\sum_{i=1}^{m+1} |y_i - y_i'| &\leq 2^Q \sum_i \sum_{\omega_j \in R_i^{\upr}}|q_j - p_j| \\
	&\leq 2^Q  \sum_{i, \omega_j } \frac{\epsilon}{M+1} \leq 2^Q \sum_{i,  \omega_j } \frac{\epsilon}{M} .
\end{align}
By definition $|x_i - x_{i-1}| \leq \xi$. The width of $R_i^{\text{upr}}$ is then at most $\xi +2 \eta$. If we assume that $\xi/\eta > 2$ then there are at most $(\xi +2 \eta)/\eta \leq 2\xi/\eta$  bins labelled by $\omega_j$ in $R_i^{\upr}$ and therefore
\begin{align}
	\sum_{i=1}^{m+1} |y_i - y_i'| &\leq 2^{Q} \sum_{i}  \left( \frac{ 2 \xi}{ \eta } \times \frac{\epsilon}{M} \right) = 2^{Q+1} \sum_{i}  \xi \epsilon.
\end{align}
As $\xi = 2/m$ and $m+1 < 2m$ this simplifies to
\begin{align}
	\sum_{i=1}^{m+1} |y_i - y_i'|	&\leq  2^{Q+1} \times 2 m \times \xi \epsilon  \\
	&=  4 \times 2^{Q+1}  \epsilon
	=\frac{\Delta}{2}.
\end{align}
\emph{If} this bound succeeded, the overall deviation due to the envelope, plus the \nameref{prob:qeep} deviation due to the precision parameter $\epsilon=\Delta /(8 \times 2^{Q+1})$, thus leads to a deviation $\sum_j |y_j - n_j| \le \Delta$.

In order to ensure that the overall procedure succeeds with confidence at least $c$, we note that the success probability of one run of the subloop is now $c/ 2$ (the $c$ from \nameref{prob:qeep}, the $1/2$ from the deviation in expectation of the envelope method). From the opposite perspective, the likelihood of failure of all rounds is thus
\begin{align}
	\left(1-\frac{c}{2}\right)^{\lceil \log_{(1-\frac{c}{2})}(1-c) \rceil}
	\le 1-c,
\end{align}
and hence the overall success probability is lower-bounded by the desired confidence level $c$. The theorem statement then follows.
\end{proof}

\section{QEEP Indicator Functions}\label{app:TBound}
\subsection{Alternative Indicator Function} \label{subsec:altInd}
Computing an explicit lower bound for $T$ using Somma's original indicator function is a non-trivial task. We include an asymptotic calculation in \cref{ap:sommaT}. However for near term applications, a simpler indicator function may yield better analytic bounds in the non-asymptotic regime. For this reason we consider a different indicator function, which allows us to derive an explicit bound on $T$ without further restrictions when said bound holds.
To this end---and re-using the symbols $f_j$ and $F_j$ in the following---we define
\begin{equation}\label{eq:f_j2}
	f(w) \coloneqq \begin{cases}
		\cos^2\left(\frac{\pi w}{2\eta}\right)  &  -\eta < w < \eta \\ 0 & \text{otherwise}.
	\end{cases}
\end{equation}
It is easy to check that these indicator functions satisfy the conditions of \nameref{prob:qeep}.
Furthermore, we can explicitly derive its Fourier transform (and thus the Fourier series coefficients) from \cref{eq:f-F-pair}, namely
\begin{align}
	F(t) = \begin{cases}
		\displaystyle\frac{\pi \sin(t\eta)}{2 t \pi^2 - 2 t^3 \eta^2} & t \neq 0 \\[5mm] \displaystyle\frac{\eta}{2\pi} & \text{otherwise}
	\end{cases}
\end{align}
and thus for $t>\pi/\eta$ we have
\begin{align}
	|F(t)| \le \frac{1}{2\pi} \times \frac{1}{t^3\eta^2/\pi^2 - t}.
\end{align}
Together with \cref{eq:pApprox} we then get
\begin{align}
	|p_j - p'_j| \le \frac{1}{2\pi} \sum_{|t|>T} \frac{\pi^2}{t^3\eta^2 - t \pi^2} \le \frac{1}{2\pi} \int_{T}^\infty \frac{\pi^2}{t^3\eta^2 - t \pi^2} \mathrm dk
	= \frac{1}{4\pi} \log \left(\frac{\eta ^2 T^2}{\eta ^2 T^2-\pi ^2}\right)
	\overset{!}{\le} \frac{\epsilon}{2(M+1)}
\end{align}
which together with $M+1 \leq 2/\eta$ yields
\begin{equation}\label{eq:T-bound-2}
	T \ge \frac{\pi}{\eta} \times \frac{\ee^{\pi \eta \epsilon/2}}{\sqrt{\ee^{\pi \eta \epsilon}-1}}.
\end{equation}
Both for \cref{eq:T-bound-1,eq:T-bound-2} we gain a factor of $M$ in case we are interested in the max-norm instead of the 1-norm, as mentioned.
We compare this new bound with \citeauthor{Somma2019}'s bound in \cref{fig:T-bound}.
\begin{figure}[t]
	\hspace*{-1.5cm}
	\begin{minipage}{18cm}
		\includegraphics[width=6cm]{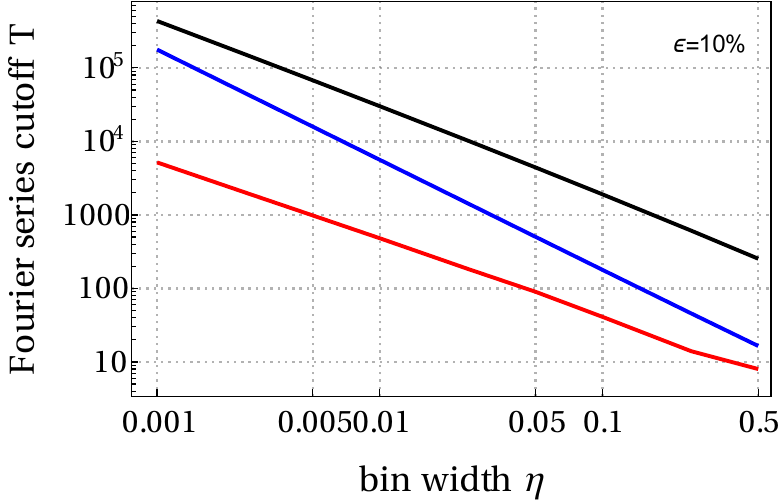}
		\includegraphics[width=6cm]{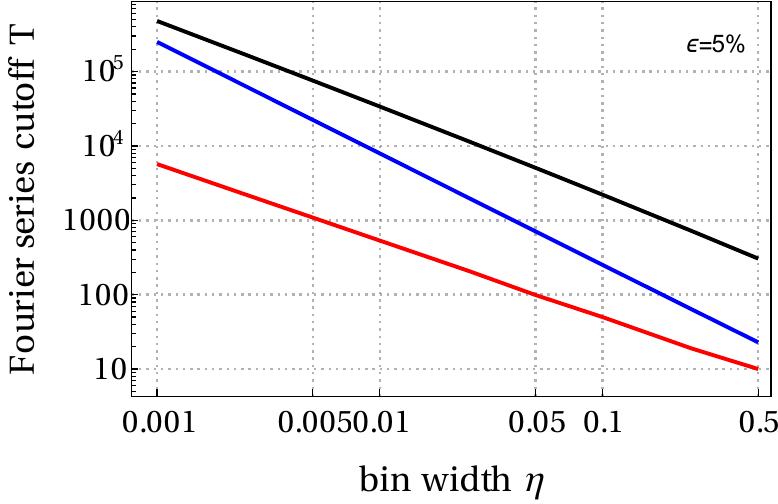}
		\includegraphics[width=6cm]{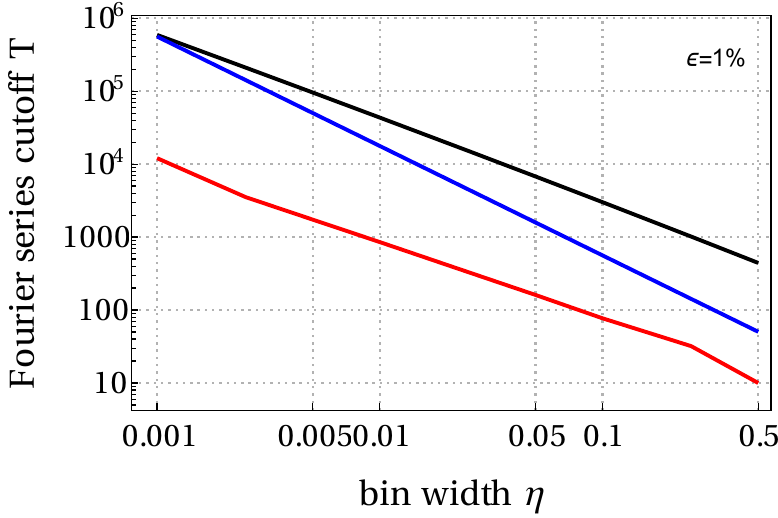}
	\end{minipage}
	\caption{Maximum evolution time $T$ for a given desired bin width $\eta$, for $\epsilon=0.1$ (left), $\epsilon=0.05$ (middle), and $\epsilon=0.01$ (right).
		Plotted are \citeauthor{Somma2019}'s bound (grey, top line) from \cref{eq:T-bound-1}, our new bound (blue, middle line) from \cref{eq:T-bound-2}, and a numerical bound obtained by summing the absolute values of the first $10^5$ Fourier coefficients in \cref{eq:fourier-bound1} up exactly, using the spectrum for a $3\times3$ spinless FH model, and using \cref{eq:T-bound-2} for the coefficients $>10^5$.}
	
	\label{fig:T-bound}
\end{figure}

\subsection{Original Indicator Function}\label{ap:sommaT}
In this section we find a lower bound on the Fourier series cutoff $T$ in \cref{eq:pApprox}, when using Somma's indicator function, as described in  \cref{subsec:qeep}. We wish to determine the minimal $T$ such that
\begin{align}
\|\mathbf{p} - \mathbf{p}'\| \leq \frac{\epsilon}{2}.
\end{align}
where $\mathbf{p}$ is given in Equation \ref{eq:pExact}. By \cref{eq:pApprox,eq:pExact}, we have
\begin{align}\label{eq:fourier-bound1}
|p_j - p'_j| \leq  \sum_{|t| > T} |F_t^j| |g_t|.
\end{align}
Since $|g_t| \leq 1$, it suffices to find a bound $F_t^j$. Assuming that $T$ is chosen large enough such that Equation \ref{eq:Fdecay} holds gives us
\begin{equation}
|p_j - p'_j| \leq  \sum_{|t|>T} \eta \left|\exp\left(- \sqrt{|t| \eta/2 }\right)\right| \le 2 \eta \int_{t=T}^\infty \exp\left(- \sqrt{|T| \eta/2 }\right) \mathrm dt = 8 \ee^{-\sqrt{T\eta/2}}\left(1 + \sqrt{T \eta/2} \right)
\end{equation}
Distributing $\epsilon/2$ equally amongst the $M+1 \leq 2M$ bins gives the bound on the Fourier truncation $T$ for $T\ge \alpha/\eta$
\begin{equation}\label{eq:T-bound-1}
    \ee^{-\sqrt{T\eta/2}}\left(1 + \sqrt{T \eta/2} \right) \overset{!}{\le} \frac{\epsilon}{32M}
    \quad\Longrightarrow\quad
    T \overset!\ge \frac2\eta \left( 1 + \operatorname{plog}_{-1}\left(-\frac{\epsilon\eta}{32e} \right) \right)^2.
\end{equation}
Here the product logarithm $\operatorname{plog}_{-1}(y)$ denotes the $-1$\textsuperscript{th} solution for $w$ in the equation $y=w \ee^w$.
We emphasise that this is an asymptotic result, finding $\alpha$ is not straightforward.

\end{document}
